\numberwithin{equation}{section}
\newtheorem{thm}{Theorem}[section]
\newtheorem{lem}[thm]{Lemma}
\newtheorem{prop}[thm]{Proposition}
\theoremstyle{definition}
\newtheorem*{def*}{Definition}
\newtheorem{defn}[thm]{Definition}
\newtheorem*{rem*}{Remark}
\newtheorem{rem}[thm]{Remark}
\DeclareMathOperator{\Res}{Res}
\DeclareMathOperator{\re}{Re}
\DeclareMathOperator{\ad}{ad}
\newcommand{\IP}[2]{\langle #1, #2 \rangle}
\title{Baker--Akhiezer function for the deformed root system $BC(l,1)$ and bispectrality} 
\author{Iain McWhinnie, Liam Rooke, Martin Vrabec\footnote{Corresponding author, \texttt{martinvrabec222@gmail.com}}}
\date{\today}
\affil{School of Mathematics and Statistics, University of Glasgow, University Place, Glasgow G12 8QQ, UK}
\date{}
\begin{document}
\maketitle

\begin{abstract}
    We show that a Sergeev--Veselov difference operator of rational Macdonald--Ruijsenaars (MR) type for the deformed root system $BC(l,1)$ preserves a ring of quasi-invariants in the case of non-negative integer values of the multiplicity parameters. We prove that in this case the operator admits a (multidimensional) Baker--Akhiezer eigenfunction, which depends on spectral parameters and which is, moreover, as a function of the spectral variables an eigenfunction for the (trigonometric) generalised Calogero--Moser--Sutherland (CMS) Hamiltonian for~$BC(l,1)$. 
    By an analytic continuation argument, we generalise this eigenfunction also to the case of more general complex values of the multiplicities. This leads to a bispectral duality statement for the corresponding MR and CMS systems of type~$BC(l,1)$. 
\end{abstract}

\section{Introduction}
The Calogero--Moser--Sutherland (CMS) models are an important example of integrable many-body Hamiltonian systems in one spatial dimension. Their study goes back to the works of Calogero~\cite{C'71}, Sutherland~\cite{S'72}, and Moser~\cite{M'75}, who investigated certain problems of pairwise interacting particles on a line or a circle.
Olshanetsky and Perelomov observed a connection between the original CMS Hamiltonians and the root system~$A_l$ ($l \in \mathbb{Z}_{>0}$), and they generalised these Hamiltonians to the case of arbitrary root systems of Weyl groups in~\cite{OP'76, OP'77} in a way that preserved integrability. 

Chalykh, Feigin, and Veselov showed in~\cite{VFC, CFV'98, CFV'99} in the quantum case that there exist further integrable generalisations related to other special configurations of vectors that are not root systems. The examples they discovered were certain one-parametric deformations $A(l,1)$ and $C(l,1)$ of the root systems~$A_{l+1}$ and~$C_{l+1}$, respectively. Other examples have been discovered since, one of which is the deformation~$BC(l,l')$ of the root system $BC_{l+l'}$ that was first considered in~\cite{Sergeev_2004} by Sergeev and Veselov. An elliptic version of the special case~$BC(l,1)$ appeared earlier in~\cite{CEO}.

In general, the eigenfunctions of quantum integrable systems may be complicated. However, the generalised CMS operators associated with any root system with Weyl-invariant integer values of multiplicity parameters admit as (singular) eigenfunctions so-called multidimensional Baker--Akhiezer (BA) functions, which are relatively elementary functions~\cite{Chalykh_1990, Chalykh_1993, CFV'99}. They have the form 
\begin{equation*}
    \psi(z,x) = P(z,x)e^{\IP{z}{x}},
\end{equation*}
where $P$ is a polynomial in a set of complex variables~$z$, with $P$ depending also on another set of complex variables~$x$ that are the variables in which the CMS operators act, and $\IP{\cdot}{\cdot}$ denotes the standard real Euclidean inner product extended $\mathbb{C}$-bilinearly. 

The function~$\psi$ can be characterised by its properties as a function of the variables~$z$. 
Such an axiomatic definition of a multidimensional BA function was proposed by Chalykh, Styrkas, and
Veselov for an arbitrary finite collection of non-collinear vectors with
integer multiplicities in~\cite{Chalykh_1993} --- the case of (positive subsystems of) reduced root systems was considered earlier in~\cite{Chalykh_1990} --- and see also~\cite{Feigin_2005} for a weaker version of the axiomatics. For the (only) non-reduced root system~$BC_l$, an axiomatic definition of the BA function was given by Chalykh in~\cite{Chalykh_2000}. 
Recently, Feigin and one of the authors extended the definition to the case of configurations where collinear vectors are allowed more generally as long as all subsets of collinear vectors are of the form $\{\alpha$, $2\alpha\}$~\cite{Feigin_2022}.
The key properties that the function needs to satisfy are
quasi-invariance conditions of the form
\begin{equation*}
    \psi(z + s\alpha, x) = \psi(z - s\alpha, x)
\end{equation*}
at $\IP{\alpha}{z} = 0$ for vectors $\alpha$ in the configuration, where $s$ takes
special integer values depending on the multiplicities.

Such a BA function can exist only for very special configurations. The corresponding generalised CMS operators have to be algebraically integrable, that is, contained in a large commutative ring of differential operators; and if the BA function exists, then it is a common eigenfunction for all the operators in this ring~\cite{Chalykh_1993, Feigin_2005, Feigin_2022}. 
In addition to the root systems case, BA functions have been found also for some other configurations. The known
examples include the aforementioned deformations from~\cite{CFV'98, CFV'99} of the root systems of type $A$ and $C$ in the case when the multiplicity parameters are integer. The corresponding BA functions were constructed in~\cite{Chalykh_2000, Feigin_2005}, respectively. Another deformation~$A_{l,2}$ of the root system of type~$A$ appeared in~\cite{CVlocus}, and the BA function (satisfying the weakened axiomatics) for it was given in~\cite{Feigin_2005}. More recently, a new configuration, called~$AG_2$, emerged in the work of Fairley and Feigin~\cite{FF}, for which the BA function was constructed in~\cite{Feigin_2022}. The corresponding generalised CMS operator was studied in~\cite{Feigin_2019}.


The deformed CMS operator associated with the configuration $BC(l,l')$ from~\cite{Sergeev_2004} is algebraically integrable in the case when $l'=1$ and all the multiplicity parameters are integer~\cite{Chalykh_2008}, which suggested that in this case there might exist a BA function for it. This has been the only remaining known example of an algebraically integrable monodromy-free CMS-type operator for which a BA function had not been written down.  
In this paper, we explicitly construct it by a method modelled on Chalykh's one from the paper~\cite{Chalykh_2000} (see~\cite{Feigin_2005, Feigin_2022} for further examples where such a technique has been applied), utilising also some general results proved in~\cite{Feigin_2022}. The construction uses an integrable difference operator, acting in the variables~$z$, of rational Macdonald--Ruijsenaars type for the configuration~$BC(l,1)$ that was introduced by Sergeev and Veselov in~\cite{Sergeev_2009}.

The Macdonald--Ruijsenaars (MR) difference operators may be viewed as difference versions of the CMS Hamiltonians and their quantum integrals. They were introduced by Ruijsenaars in~\cite{R'87} for the root system of type~$A$, and in the case of arbitrary reduced
root systems by Macdonald~\cite{M}, and for the root system $BC_l$  by Koornwinder~\cite{Koornwinder_1992}. 
Their rational versions satisfy a bispectral duality relation with the (trigonometric) CMS Hamiltonians associated with the same root system. This duality was conjectured by Ruijsenaars~\cite{R'90}, and proved by Chalykh in~\cite{Chalykh_2000}. Namely, there exists a function $\Psi$ of two sets of variables $x$ and $z$ such that
\begin{equation}\label{eqn: bispectrality}
    H \Psi = \lambda \Psi, \quad D \Psi = \mu \Psi,
\end{equation}
where $H= H(x, \partial_x)$ is the generalised CMS Hamiltonian or its integral, $D$ is a rational MR operator acting in the spectral variables~$z$, and $\lambda = \lambda(z)$, $\mu = \mu(x)$ are the respective eigenvalues.
The relations~\eqref{eqn: bispectrality} are a higher-dimensional differential-difference analogue of the one-dimensional differential-differential bispectrality relation studied by Duistermaat and Gr\"unbaum~\cite{DG}. 

In the case of integer (and Weyl-invariant) multiplicities, the function~$\Psi$ in~\eqref{eqn: bispectrality} can be taken to be the BA function for the root system in question --- and the case of non-integer multiplicities can be handled by an analytic continuation argument~\cite{Chalykh_2000}. Moreover, Chalykh showed that the operator~$D$ can be used to explicitly construct the BA function itself. The key property needed for this from the operator $D$ is the preservation of a space of quasi-invariant analytic functions.

We note that the relation~\eqref{eqn: bispectrality} for the root system~$A_l$
with multiplicity parameter greater than~$1$ and for a function~$\Psi$
given by a Mellin--Barnes type integral was obtained by Kharchev and Khoroshkin in~\cite{KK}.

A form of bispectrality may also be seen in terms of special families of multivariable orthogonal polynomials (rather than a single function $\Psi$ depending on spectral parameters). In the case of the root system $A_l$, these are the Jack polynomials, and for other root systems they are the multivariable Jacobi polynomials, which admit Pieri-type formulas that can be interpreted as bispectrality between the CMS Hamiltonians 
and difference operators acting on the weights indexing the polynomials~\cite{McD, HO, Chalykh_2000}. 

Let us also mention that a version of the notion of a BA function exists for the rational degeneration of the CMS Hamiltonians $H$ as well. When suitably normalised, this function is symmetric under the exchange of~$x$ and~$z$, and the rational version of $H$ is bispectrally self-dual~\cite{Chalykh_1993}. 
BA functions for the trigonometric MR operators and their bispectrality properties were investigated in~\cite{Chalykh_2002}.

The paper~\cite{Chalykh_2000} proved also an analogue of the duality~\eqref{eqn: bispectrality} for the deformed root system~$A(l,1)$ and corresponding generalisation~$D$ of the rational limit of Ruijsenaars' operators. For the deformed root systems~$C(l,1)$ and $A_{l,2}$, this was done
(in the case of integer multiplicity parameters) in~\cite{Feigin_2005}, and the case of $AG_2$ was treated in~\cite{Feigin_2022}.
In this paper, we prove an analogue of the duality~\eqref{eqn: bispectrality} for the deformed root system~$BC(l,1)$, where $D$ will be Sergeev's and Veselov's rational difference operator for~$BC(l,1)$~\cite{Sergeev_2009}, and~$H$ their $BC(l,1)$-type generalised CMS Hamiltonian~\cite{Sergeev_2004}. In particular, for special integer values of the multiplicities, one recovers the results from~\cite{Feigin_2005} for the configuration~$C(l,1)$. 
Another bispectrality property of the Sergeev--Veselov operators for $BC(l,1)$ (as well as for $BC(l,l')$) in terms of super-Jacobi polynomials was proved in~\cite{Sergeev_2009}.

The structure of the paper is as follows. In Section~\ref{sec: BC(l,1)}, we describe the configuration~$BC(l,1)$ and its associated deformed CMS Hamiltonian. In Section~\ref{sec: difference operator}, we recall the Sergeev--Veselov difference operator for~$BC(l,1)$ and prove that, when the multiplicity parameters are non-negative integers, it preserves a ring of quasi-invariants. We use this property in Section~\ref{sec: BA function} to construct the BA function for $BC(l,1)$ and prove that it satisfies bispectrality. We generalise that bispectral duality statement to more general complex values of the multiplicities in Section~\ref{sec: non-integer multiplicities}.


\section{Deformed root system $BC(l,1)$} \label{sec: BC(l,1)}

The root system $BC_{l+1}$ has
a positive half $$BC_{l+1,+}=\{ e_i, \, 2e_i \,|\, 1 \leq i \leq l+1 \} \cup \{ e_i \pm e_j \,|\, 1 \leq i < j \leq l+1 \},$$
where $e_i$ denote the standard orthogonal unit vectors in $\mathbb{R}^{l+1}$.  
In this paper, we consider a deformation~$BC(l,1)$ of the root system~$BC_{l+1}$ with a positive half   
\[
BC(l,1)_{+}= \{ e_i, \, 2e_i,\, e_i \pm \sqrt{k}e_{l+1} \,|\, 1 \leq i \leq l \} \cup \{\sqrt{k} e_{l+1},\, 2\sqrt{k} e_{l+1} \} \cup \{e_i \pm e_j \, | \, 1 \leq i < j \leq l\} \subset \mathbb{C}^{l+1},
\]
where $k$ is a non-zero complex parameter~\cite{CEO, Sergeev_2004}.
Let $BC(l,1)^r$ denote the reduced version of this system whose positive half is
\begin{align*}
    BC(l,1)^r_+ &= \{ \alpha \in BC(l,1)_{+} \, | \, \tfrac12 \alpha \notin BC(l,1)\} \\
    &= \{ e_i, \, e_i \pm \sqrt{k}e_{l+1} \,|\, 1 \leq i \leq l \} \cup \{\sqrt{k} e_{l+1}\} \cup \{e_i \pm e_j \, | \, 1 \leq i < j \leq l\}.
\end{align*}
The vectors in the set $BC(l,1)_+$ are assigned certain multiplicities, which we specify below.

A multiplicity map on a set $\mathcal{A} \subset \mathbb{C}^{l+1}$ is a function $\mathcal{A} \to \mathbb{C}$ which assigns to each vector $\alpha \in \mathcal{A}$ a complex number $m_\alpha$ called its multiplicity. In Sections~\ref{sec: difference operator} and~\ref{sec: BA function}, we will be specifically interested in the case when the multiplicities are contained in~$\mathbb{Z}_{\geq 0}$.
Then for any $\alpha \in \mathcal{A}$, we define the set $$A_\alpha \coloneqq \{1,2,\dots,m_\alpha\} \cup \{ m_\alpha +2, m_\alpha +4, \dots, m_\alpha + 2m_{2\alpha}\},$$ where $m_{2\alpha}\coloneqq 0$ if $2\alpha \notin \mathcal{A}$.

The set $BC(l,1)_+$ has its multiplicity map given by $m_{e_i}=m$, $m_{2e_i}=n$, $m_{e_i \pm \sqrt{k}e_{l+1}}=1$, $m_{e_i \pm e_j} = k$, $m_{\sqrt{k}e_{l+1}}=p$, and $m_{2\sqrt{k}e_{l+1}}=r$ for complex parameters $m,n,p,r$, subject to the constraint that~$m=kp$ and~$2n+1 = k(2r+1)$. For $m=p=0$, the configuration $BC(l,1)$ reduces to the configuration $C(l,1)$, which was considered in~\cite{Feigin_2005}.
For $k=1$, the configuration $BC(l,1)$ reduces to the root system $BC_{l+1}$ with a Weyl-invariant assignment of multiplicities such that the vectors $e_i \pm e_j$ for $1 \leq i < j \leq l+1$ have multiplicity~$1$. In~Sections~\ref{sec: difference operator} and~\ref{sec: BA function}, we assume that $m,n,p,r \in \mathbb{Z}_{\geq 0}$, 
and if $l>1$ then also that $k \in \mathbb{Z}_{>0}$.

The generalised CMS operator associated with a finite collection of vectors $\mathcal{A} \subset \mathbb{C}^{l+1}$ with prescribed multiplicities has the form 
\begin{equation}\label{eqn: BC(l,1) CMS Hamiltonian}
    H = -\Delta + \sum_{\alpha \in \mathcal{A}} \frac{m_\alpha(m_\alpha + 2 m_{2\alpha} + 1)\langle \alpha, \alpha \rangle}{\sinh^2 \langle \alpha, x \rangle},
\end{equation}
where $x = (x_1, \dots, x_{l+1}) \in \mathbb{C}^{l+1}$ and $\Delta = \sum_{i=1}^{l+1} \partial_{x_i}^2$ ($\partial_{x_i} = \partial/\partial x_i$) is the Laplace operator on $\mathbb{C}^{l+1}$.
The Olshanetsky--Perelomov operators correspond to letting $\mathcal{A}$ be a positive half of a root system with a Weyl-invariant assignment of multiplicities. 
If one puts $\mathcal{A} = BC(l,1)_+$ in the formula~\eqref{eqn: BC(l,1) CMS Hamiltonian}, one obtains the generalised CMS operator associated with the configuration~$BC(l,1)$~\cite{Sergeev_2004}.

\section{Sergeev--Veselov difference operator for $BC(l,1)$}\label{sec: difference operator}
In this section, we recall the rational difference operator introduced by Sergeev and Veselov for the deformed root system~$BC(l,1)$~\cite{Sergeev_2009}, which deforms the rational version of Koornwinder's operator and also generalises an operator associated with $C(l,1)$ from~\cite{Feigin_2005}.
We prove that, when all the multiplicity parameters are non-negative integers, the operator preserves a ring of quasi-invariants $\mathcal{R}_{BC(l,1)}$ defined as follows.
    Let $\mathcal{R}_{BC(l,1)}$ be the ring of all analytic functions $f \colon \mathbb{C}^{l+1} \to \mathbb{C}$ such that 
    \begin{equation}\label{quasi}
    f(z+s\alpha) = f(z-s\alpha) \, \, \mathrm{at} \, \langle{z,\alpha}\rangle = 0
    \end{equation}
    for all $\alpha \in BC(l,1)^r_+$ and $s \in A_\alpha$.

Let $z = (z_1, \dots, z_{l+1}) \in \mathbb{C}^{l+1}$.
The difference operator for $BC(l,1)$ introduced in~\cite{Sergeev_2009} 
has the form
\[
D = \sum_{i=1}^l\big(a_{2e_i}(z)(T_{2e_i}-1) + a_{-2e_i}(z)(T_{-2e_i}-1)\big)+a_{2\sqrt{k}e_{l+1}}(z)\big(T_{2\sqrt{k}e_{l+1}}-1\big)+a_{-2\sqrt{k}e_{l+1}}(z)\big(T_{-2\sqrt{k}e_{l+1}}-1\big),
\]
where $T_v$ denotes for any $v \in \mathbb{C}^{l+1}$ the shift operator defined by $T_v f(z) = f(z+v)$ for any function $f(z)$, and
\begin{align}
a_{\pm 2e_i}(z) &= \left(1 \mp \frac{m+2n}{z_i}\right)\left(1 \mp \frac{m}{z_i \pm 1}\right)\prod_{\substack{j=1 \\ j \neq i}}^l \left(1 \mp \frac{2k}{z_i-z_j}\right)\left(1 \mp \frac{2k}{z_i+z_j}\right) \nonumber \\
&\quad \times \left(1 \mp \frac{2}{z_i+\sqrt{k}z_{l+1} \pm (1-k)}\right)\left(1 \mp \frac{2}{z_i-\sqrt{k}z_{l+1} \pm (1-k)}\right), \nonumber
\end{align}

\begin{align*}
a_{\pm 2\sqrt{k}e_{l+1}}(z) &= \frac{1}{k}\left(1 \mp \frac{\sqrt{k}(p+2r)}{z_{l+1}}\right)\left(1 \mp \frac{\sqrt{k}p}{z_{l+1} \pm \sqrt{k}}\right) \\
&\quad \times \prod_{i =1}^l \left(1 \mp \frac{2k}{\sqrt{k}z_{l+1}+z_i \pm (k-1)}\right)\left(1 \mp \frac{2k}{\sqrt{k}z_{l+1}-z_i \pm (k-1)}\right).
\end{align*}
In this and the next section, we are assuming that $m,n,p,r \in \mathbb{Z}_{\geq 0}$, and if $l>1$ then also that $k \in \mathbb{Z}_{>0}$.

In order to prove below in Theorem~\ref{thm: preservation of quasi-invariants} that the operator $D$ preserves the ring $\mathcal{R}_{BC(l,1)}$, we first establish the following lemma.

\begin{lem}\label{analytic}
Let $f \in \mathcal{R}_{BC(l,1)}$. Then $Df$ is analytic.
\begin{proof}
Based on the form of the functions $a_{\pm 2e_i}$ and $a_{\pm 2\sqrt{k}e_{l+1}}$, the only possible singularities of $Df$ are potential simple poles at $z_i=0$ for $1 \leq i \leq l+1$, as well as $z_{l+1} = \pm \sqrt{k}$, $z_i=\pm 1$, $z_i = \pm \sqrt{k}z_{l+1}+k-1$, and $z_i = \pm \sqrt{k}z_{l+1}+1-k$ for $1 \leq i \leq l$, and also $z_i = \pm z_j$ for $1 \leq i < j \leq l$. The strategy is to show that at each of the possible poles,~$Df$ has zero residue. It will follow that $Df$ is analytic everywhere. We describe the computation of the residue for most of the cases, the procedure for the remaining ones being analogous.
Let us denote the residue of a function $f$ at a simple pole at $z_i = c \in \mathbb{C}$ by 
\[
\mathrm{Res}(f,z_i=c)=\lim_{z_i \to c}(z_i-c)f(z).
\]

We may assume, for simplicity, that $m \neq 0 \neq p$, as the case $m=p=0$ was covered in~\cite{Feigin_2005}.
Let us first compute the residue of $Df(z)$ at $z_i=0$ for $1 \leq i \leq l$. For $\alpha \in \mathbb{R}^{l+1} \setminus \{0\}$, let $s_{\alpha}$ denote the orthogonal reflection sending $\alpha$ to $-\alpha$ and fixing pointwise the hyperplane orthogonal to $\alpha$. We note that
\begin{equation}\label{D ei symmetry}
    a_{2e_i}(z)=a_{-2e_i}(s_{e_i}(z)),
\end{equation} which implies
\[
\Res(a_{2e_i}, z_i=0)=-\Res(a_{-2e_i}, z_i=0).
\]
Also, if $m \neq 1$ then
\[
T_{2e_i}f(z)|_{z_i=0}=T_{-2e_i}f(z)|_{z_i=0},
\]
where we use that $f \in \mathcal{R}_{BC(l,1)}$.
Note that if $m=1$ then $\Res(a_{\pm 2e_i}, z_i=0) = 0$. It follows that for any~$m \in \mathbb{Z}_{> 0}$ the residue of $Df(z)$ at $z_i=0$ is zero. Its residue at $z_{l+1}=0$ can be shown to vanish in an analogous way.

Next, we consider $z_i=-1$. The only coefficient function in $D$ that has a pole there is $a_{2e_i}$, and so $\Res(Df, z_i = -1) = 0$ follows from the fact that
\[
(T_{2e_i}f(z)-f(z))|_{z_i=-1}=0,
\]
where we again used that $f \in \mathcal{R}_{BC(l,1)}$. Similarly, the residues of $Df(z)$ at $z_i=1$ and $z_{l+1}= \pm \sqrt{k}$ are zero.

Let now $1 \leq i < j \leq l$, and
let us compute the residue of $Df(z)$ at $z_i=z_j$. We note that 
\begin{equation}\label{D ei-ej symmetry}
    a_{\pm 2e_i}(z)=a_{\pm 2e_j}(s_{e_i-e_j}(z)),
\end{equation} 
which implies 
\[
\Res(a_{\pm 2e_i}, z_i=z_j)=-\Res(a_{ \pm 2e_j}, z_i=z_j).
\]
Also, for any $u \in \mathbb{C}$ and $q = u+1$, we have 
\begin{align*}
T_{2e_i}f(z)|_{z_i=z_j=u}&= 
f(z+e_i-e_j)|_{z_i=z_j=q} =f(z-e_i+e_j)|_{z_i=z_j=q} =T_{2e_j}f(z)|_{z_i=z_j=u},
\end{align*}
where in the penultimate equality we used that $f \in \mathcal{R}_{BC(l,1)}$. Similarly,
$
T_{-2e_i}f(z)=T_{-2e_j}f(z)
$ at $z_i=z_j$.
It follows that the residue of $Df(z)$ at $z_i=z_j$ is zero. The case of $z_i=-z_j$ is similar.

Finally, we consider $z_i=-\sqrt{k}z_{l+1}+k-1$ for $1 \leq i \leq l$. We calculate the following residues: 

\begin{align}
&\Res(a_{2e_i},\, z_i=-\sqrt{k}z_{l+1}+k-1)= 
-2\left(1+\frac{m+2n}{\sqrt{k}z_{l+1}+1-k}\right)\left(1+\frac{m}{\sqrt{k}z_{l+1}-k}\right) \nonumber \\ & \times \left(1+\frac{1}{\sqrt{k}z_{l+1}}\right) \prod_{ \substack{j=1 \\j\neq i}}^l \left(1+\frac{2k}{\sqrt{k}z_{l+1}+z_j+1-k}\right)\left(1+\frac{2k}{\sqrt{k}z_{l+1}-z_j+1-k}\right), \nonumber
\end{align}

\begin{align}
&\Res(a_{-2\sqrt{k}e_{l+1}}, \, z_i=-\sqrt{k}z_{l+1}+k-1)= 2\left(1+\frac{\sqrt{k}(p+2r)}{z_{l+1}}\right)\left(1+\frac{\sqrt{k}p}{z_{l+1}-\sqrt{k}}\right) \nonumber \\ & \times \left(1+\frac{k}{\sqrt{k}z_{l+1}+1-k}\right) \prod_{\substack{j=1 \\ j\neq i}}^l \left(1+\frac{2k}{\sqrt{k}z_{l+1}+z_j+1-k}\right)\left(1+\frac{2k}{\sqrt{k}z_{l+1}-z_j+1-k}\right). \nonumber
\end{align}
In order to compare them, we observe that 
\begin{equation*}
   \frac{\sqrt{k}p}{z_{l+1}-\sqrt{k}} = \frac{m}{\sqrt{k}z_{l+1}-k}, 
\end{equation*}
since $m=kp$; and that moreover, since $2n+1 = k(2r+1)$, we also have
\begin{align*}
\left(1+\frac{\sqrt{k}(p+2r)}{z_{l+1}}\right)\left(1+\frac{k}{\sqrt{k}z_{l+1}+1-k}\right) &= \left(\frac{\sqrt{k}z_{l+1}+m+2n+1-k}{\sqrt{k}z_{l+1}}\right)\left(\frac{\sqrt{k}z_{l+1}+1}{\sqrt{k}z_{l+1}+1-k}\right)  \\
&=\left(1+\frac{m+2n}{\sqrt{k}z_{l+1}+1-k}\right)\left(1+\frac{1}{\sqrt{k}z_{l+1}}\right). 
\end{align*}
Therefore, 
\[
\Res(a_{2e_i}, \,z_i=-\sqrt{k}z_{l+1}+k-1)=-\Res(a_{-2\sqrt{k}e_{l+1}},\,z_i=-\sqrt{k}z_{l+1}+k-1).
\]
Also, for any $u \in \mathbb{C}$ and $q = -u+\sqrt{k}$, we have 
\begin{align*}
T_{2e_i}&f(z)|_{z_i=-\sqrt{k}u+k-1, \, z_{l+1} = u} =
f(z+e_i+\sqrt{k}e_{l+1})|_{z_i=\sqrt{k}q, \, z_{l+1}=-q}  \\
&=f(z-e_i-\sqrt{k}e_{l+1})|_{z_i=\sqrt{k}q, \, z_{l+1}=-q}  =T_{-2\sqrt{k}e_{l+1}}f(z)|_{z_i=-\sqrt{k}u+k-1, \, z_{l+1}=u}, 
\end{align*}
where in the penultimate equality we used that $f \in \mathcal{R}_{BC(l,1)}$.
It follows that the residue of $Df(z)$ at $z_i=-\sqrt{k}z_{l+1}+k-1$ is zero. The cases $z_i=-\sqrt{k}z_{l+1}+1-k$ and $z_i=\sqrt{k}z_{l+1} \pm (1-k)$ are similar. This completes the proof.
    \end{proof}
\end{lem}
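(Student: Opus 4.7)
The plan is to enumerate the finitely many affine hyperplanes along which $Df$ could have a singularity and show that the residue of $Df$ vanishes along each. The coefficients $a_{\pm 2e_i}$, $a_{\pm 2\sqrt{k}e_{l+1}}$ are products of linear factors of the form $1 \mp c/L(z)$ for affine-linear $L$, so all poles are simple, and an inspection of the denominators pins down the candidate hyperplanes: $z_i = 0$ for $1 \leq i \leq l+1$, $z_i = \pm 1$ for $1 \leq i \leq l$, $z_{l+1} = \pm\sqrt{k}$, $z_i = \pm z_j$ for $1 \leq i < j \leq l$, and $z_i \pm \sqrt{k}\,z_{l+1} = \pm(k-1)$ for $1 \leq i \leq l$. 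Vanishing of the residue at each of these forces $Df$ to be entire.

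At each such hyperplane $\Pi$, at most two shift terms $a_v(T_v - 1)$ in $D$ can contribute a pole, and the strategy is to pair them off and show their residues cancel. The normal to $\Pi$ is always a vector $\alpha \in BC(l,1)^r_+$, and the two shifts $v$, $v'$ are distinguished by the fact that $v - v'$ is a nonzero integer multiple of $\alpha$; the quasi-invariance of $f \in \mathcal{R}_{BC(l,1)}$ then yields $T_v f = T_{v'} f$ on $\Pi$ (with the constants $-1$ in $T_v - 1$ and $T_{v'} - 1$ contributing nothing to the residue). It remains to check that $\Res(a_v, \Pi) = -\Res(a_{v'}, \Pi)$. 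For the single-variable singularities ($z_i = 0$, $z_i = \pm 1$, $z_{l+1} = 0$, $z_{l+1} = \pm\sqrt{k}$) this is essentially immediate from the reflection identity $a_{2e_i}(z) = a_{-2e_i}(s_{e_i}(z))$ and its $e_{l+1}$ analogue, and the cases $z_i = \pm z_j$ are dealt with by the exchange symmetry $a_{\pm 2e_i}(z) = a_{\pm 2e_j}(s_{e_i-e_j}(z))$ in combination with quasi-invariance along $\alpha = e_i \mp e_j$.

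The principal obstacle is expected to be the mixed hyperplanes $z_i \pm \sqrt{k}\,z_{l+1} = \pm(k-1)$. Here the two contributing shifts $T_{\pm 2e_i}$ and $T_{\mp 2\sqrt{k}e_{l+1}}$ are of different magnitudes and live in different subspaces; moreover, $a_{\pm 2\sqrt{k}e_{l+1}}$ carries an overall prefactor of $1/k$, and the residues involve all four multiplicity parameters $m, n, p, r$. After computing both residues explicitly, I expect the products over $j \neq i$ to coincide immediately, whereas the matching of the remaining scalar factors will crucially exploit the two parameter constraints $m = kp$ and $2n + 1 = k(2r+1)$ to convert the $p$-- and $r$--dependent pieces of $\Res(a_{\mp 2\sqrt{k}e_{l+1}}, \Pi)$ into the $m$-- and $n$--dependent pieces of $\Res(a_{\pm 2e_i}, \Pi)$. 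Once the residues are shown to cancel, the identification $T_{\pm 2e_i} f = T_{\mp 2\sqrt{k}e_{l+1}} f$ on $\Pi$ reduces, after a translation parallel to $\Pi$, to quasi-invariance of $f$ under $s = 1 \in A_\alpha$ for the reduced vector $\alpha = e_i \pm \sqrt{k}e_{l+1}$; the remaining mixed cases should follow by the same mechanism after relabelling signs. This last algebraic matching will be the hardest step, and the parameter constraints are what make it possible at all.
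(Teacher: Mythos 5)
Your overall strategy is the paper's: enumerate the candidate simple-pole hyperplanes, pair off singular terms, and cancel residues using the reflection/exchange symmetries of the coefficients together with quasi-invariance of $f$, with the mixed hyperplanes $z_i \pm \sqrt{k}z_{l+1} = \pm(k-1)$ handled exactly as you anticipate (explicit residues, matched via $m=kp$ and $2n+1=k(2r+1)$, plus quasi-invariance with $s=1$ for $\alpha = e_i \pm \sqrt{k}e_{l+1}$). However, two of your blanket claims fail on specific hyperplanes and would need repairing. First, at $z_i = \pm 1$ and $z_{l+1} = \pm\sqrt{k}$ there is only \emph{one} singular coefficient (e.g.\ at $z_i=-1$ only $a_{2e_i}$, via its factor $1 - m/(z_i+1)$; the matching pole of $a_{-2e_i}$ sits on the \emph{different} hyperplane $z_i=+1$), so there is no second shift term to pair with and the reflection identity gives no cancellation there. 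The residue instead vanishes because $(T_{2e_i}-1)f|_{z_i=-1} = f(z+2e_i)|_{z_i=-1} - f(z)|_{z_i=-1} = 0$ by quasi-invariance with $s=1$ for $\alpha = e_i$; that is, precisely the ``$-1$'' in $T_v - 1$ that you dismiss as contributing nothing to the residue is what produces the cancellation.

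Second, at $z_i=0$ the pairing $T_{2e_i}f = T_{-2e_i}f$ is quasi-invariance with $s=2$, and $2 \in A_{e_i}$ only if $m \neq 1$; when $m=1$ one has $A_{e_i} = \{1,3,\dots\}$ and the identity is unavailable. The paper covers this case by observing that $\Res(a_{\pm 2e_i}, z_i=0)=0$ outright when $m=1$, because the factor $1 \mp m/(z_i\pm 1)$ vanishes at $z_i=0$; your write-up would need the same edge case (and its analogue at $z_{l+1}=0$ for $p=1$). A smaller miscount: at $z_i = \pm z_j$ all four of $a_{\pm 2e_i}$, $a_{\pm 2e_j}$ are singular, not at most two; they cancel in the two pairs you describe. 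With these repairs your outline matches the paper's proof, including the treatment of the hard mixed case.
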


\begin{thm}\label{thm: preservation of quasi-invariants}
Let $f \in \mathcal{R}_{BC(l,1)}$. Then $Df \in \mathcal{R}_{BC(l,1)}$. 
\begin{proof}
By Lemma \ref{analytic}, $Df$ is analytic, so it only remains to show that $Df$ satisfies the functional identities~\eqref{quasi}.
Let $\alpha \in BC(l,1)_+^r$ and $s \in A_\alpha$.

Suppose $\alpha = e_i$ for $1 \leq i \leq l$. Then at $\langle \alpha, z \rangle = 0$, we have $a_{2e_i}(z+s\alpha)=a_{-2e_i}(z-s\alpha)$  by equality~\eqref{D ei symmetry}. And for all $j \neq i$, it is straightforward to see that $a_{\pm 2e_j}(z+s\alpha)=a_{\pm 2e_j}(z-s\alpha)$ and $a_{\pm 2\sqrt{k}e_{l+1}}(z+s\alpha)=a_{\pm 2\sqrt{k}e_{l+1}}(z-s\alpha)$ at $\langle \alpha, z \rangle = 0$. For $s \neq 1$, we have $a_{-2e_i}(z+s\alpha) = a_{2e_i}(z-s\alpha)$ at $\langle \alpha, z \rangle = 0$ by equality~\eqref{D ei symmetry} (for $s=1$, the functions $a_{-2e_i}(z+s\alpha)$ and $a_{2e_i}(z-s\alpha)$ are singular at~$\langle \alpha, z \rangle = 0$ and we will deal with this case separately).

Now observe that for $s \neq m-1, \, m+2n,$ we have $s+2 \in A_{e_i}$ and thus since $f \in \mathcal{R}_{BC(l,1)}$ that $T_{2e_i}f(z+se_i)|_{z_i=0}=T_{-2e_i}f(z-se_i)|_{z_i=0}$. On the other hand, if $s \in \{m-1, m+2n\}$ then $a_{2e_i}(z+se_i)|_{z_i=0}=0$ $=a_{-2e_i}(z-se_i)|_{z_i=0}$. 
Also, since $\langle z \pm 2e_j, e_i \rangle = \langle z \pm 2\sqrt{k}e_{l+1}, e_i \rangle = z_i$, we have for any $s \in A_{e_i}$ that
 $T_{\pm 2e_j}f(z+se_i)|_{z_i=0}=T_{\pm 2e_j}f(z-se_i)|_{z_i=0}$ and $T_{\pm 2\sqrt{k}e_{l+1}}f(z+se_i)|_{z_i=0}=T_{\pm 2\sqrt{k}e_{l+1}}f(z-se_i)|_{z_i=0}$. And for $s \neq 1$, we have $s-2 \in A_{e_i} \cup \{0\}$ and hence
 $T_{-2e_i}f(z+se_i)|_{z_i=0}=T_{2e_i}f(z-se_i)|_{z_i=0}$. It follows from this and the previous paragraph that the identities~\eqref{quasi} are satisfied for $Df$ for $\alpha = e_i$ for any $s \in A_{e_i} \setminus \{1\}$.

 Let us now deal with the case $s=1$. The property that
 $$\left(f(z-e_i) - f(z+e_i)\right)|_{z_i=0} = 0$$
 can be restated as $f(z-e_i) - f(z+e_i) = z_i g(z)$ for some analytic function $g(z)$. Thus, we have
 \begin{align*}
     \lim_{z_i \to 0} a_{-2e_i}(z&+e_i)(T_{-2e_i}-1)f(z+e_i) = 
     g(z)|_{z_i=0}\lim_{z_i \to 0} z_ia_{-2e_i}(z+e_i) \\
     &\stackrel{\eqref{D ei symmetry}}{=} 
     -  g(z)|_{z_i=0}\lim_{z_i \to 0} z_i a_{2e_i}(z-e_i) 
     =  \lim_{z_i \to 0} a_{2e_i}(z-e_i)(T_{2e_i} - 1)f(z-e_i).
 \end{align*}
It follows from this and the previous two paragraphs that the identity~\eqref{quasi} for $Df$ holds for $\alpha = e_i$ also when~$s=1$. 
The case of $\alpha = \sqrt{k}e_{l+1}$ can be dealt with in an analogous way.

Next, suppose $\alpha = e_i-e_j$ for $1 \leq i < j \leq l$. Then
$
a_{\pm 2e_i}(z+ \widetilde{s}\alpha)=a_{\pm 2e_j}(z-\widetilde{s}\alpha)
$ at $\langle \alpha, z \rangle = 0$ for $\widetilde{s} \in \{ \pm s\}$ by equality~\eqref{D ei-ej symmetry}. 
If $s \neq k$ then $s+1 \in A_{e_i - e_j}$ and then for any $u \in \mathbb{C}$ and $q = u+1$, we have
\begin{align*}
T_{2e_i}f(z +s(e_i-e_j))|_{z_i= z_j=u} &= f(z + (s+1)(e_i-e_j))|_{z_i = z_j=q} \\
&= f(z - (s+1)(e_i-e_j))|_{z_i = z_j=q} = T_{2e_j}f(z -s(e_i-e_j))|_{z_i= z_j=u}, 
\end{align*}
where we used that $f \in \mathcal{R}_{BC(l,1)}$;
and similarly 
$T_{-2e_j} f(z+s\alpha) = T_{-2e_i}f(z-s\alpha)$ at $\langle \alpha, z \rangle = 0$.
On the other hand, if $s=k$ then $a_{2e_i}(z+s\alpha) = 0 =a_{-2e_j}(z+s\alpha)$ at $\langle \alpha, z \rangle = 0$.
Moreover, for any $s \in A_{e_i -e_j}$, we have $s-1 \in A_{e_i -e_j} \cup \{0\}$, which can be used to see that 
$T_{-2e_i}f(z+s\alpha) = T_{-2e_j}f(z-s\alpha)$ and $T_{2e_j}f(z+s\alpha) = T_{2e_i}f(z-s\alpha)$ at $\langle \alpha, z \rangle = 0$.

For all $t \neq i, j$ and $\langle \alpha, z \rangle = 0$, 
it is straightforward to see that $a_{\pm 2e_t}(z+s\alpha)=a_{\pm 2e_t}(z-s\alpha)$ and $a_{\pm 2 \sqrt{k}e_{l+1}}(z+s\alpha) = a_{\pm 2 \sqrt{k}e_{l+1}}(z-s\alpha)$; and that 
$T_{\pm 2e_t}f(z+s\alpha) = T_{\pm 2 e_t}f(z-s\alpha)$ and $T_{\pm 2 \sqrt{k}e_{l+1}}f(z+s\alpha) = T_{\pm 2 \sqrt{k}e_{l+1}}f(z-s\alpha)$ since $f \in \mathcal{R}_{BC(l,1)}$ and $\langle z \pm 2e_t, \alpha \rangle = \langle z \pm 2\sqrt{k}e_{l+1}, \alpha \rangle = 0$.

It follows from the above two paragraphs that the identities~\eqref{quasi} are satisfied for $Df$ for $\alpha = e_i-e_j$. The case of $\alpha = e_i + e_j$ is analogous.

Finally, suppose $\alpha = e_i + \sqrt{k}e_{l+1}$ for $1 \leq i \leq l$. Note that $A_{e_i + \sqrt{k}e_{l+1}} = \{1\}$, so $s=1$. For $\varepsilon \in \{\pm 1\}$, 
\[
a_{2\varepsilon e_i}(z + \varepsilon (e_i + \sqrt{k}e_{l+1}))|_{z_i=-\sqrt{k}z_{l+1}} = 0 = a_{2\varepsilon\sqrt{k}e_{l+1}}(z + \varepsilon (e_i + \sqrt{k}e_{l+1}))|_{z_i=-\sqrt{k}z_{l+1}}.
\]
We also have
\[
a_{- 2\varepsilon e_i}(z + \varepsilon (e_i + \sqrt{k}e_{l+1}))|_{z_i=-\sqrt{k}z_{l+1}}=a_{2\varepsilon \sqrt{k}e_{l+1}}(z - \varepsilon (e_i + \sqrt{k}e_{l+1}))|_{z_i=-\sqrt{k}z_{l+1}}.
\]
The latter can be seen upon rewriting $m$ and $n$ in terms of $p$ and $r$, and using, in particular, that
\begin{equation*}
    \left(1 - \frac{\varepsilon(m+2n)}{\sqrt{k}z_{l+1} -\varepsilon}\right) 
    \left(1 - \frac{\varepsilon}{\sqrt{k}z_{l+1}}\right) = \frac{\sqrt{k}z_{l+1} -\varepsilon k -\varepsilon k\left(p+2r\right)}{\sqrt{k}z_{l+1}} 
    = \left(1 - \frac{\varepsilon\sqrt{k}\left(p+2r\right)}{z_{l+1} - \varepsilon\sqrt{k}}\right)
    \left(1 - \frac{\varepsilon\sqrt{k}}{z_{l+1}}\right).
\end{equation*}
Moreover, we have
\[
\begin{aligned}
(T_{-2\varepsilon e_i}-1)f(z + \varepsilon (e_i + \sqrt{k}e_{l+1}))|_{z_i=-\sqrt{k}z_{l+1}} &= \left( f(z - \varepsilon (e_i - \sqrt{k}e_{l+1})) -  f(z + \varepsilon (e_i + \sqrt{k}e_{l+1}))\right)|_{z_i=-\sqrt{k}z_{l+1}} \\
&=(T_{2\varepsilon\sqrt{k}e_{l+1}}-1)f(z - \varepsilon (e_i + \sqrt{k}e_{l+1}))|_{z_i=-\sqrt{k}z_{l+1}},
\end{aligned}
\]
where we used that $f \in \mathcal{R}_{BC(l,1)}$.

For $j \neq i$ and $z_i=-\sqrt{k}z_{l+1}$, we have 
$a_{2e_j}(z + e_i + \sqrt{k}e_{l+1})=a_{2e_j}(z - e_i - \sqrt{k}e_{l+1})$,
which can be seen by using that
\begin{align*}
    &\left(1 - \frac{2k}{z_j + \sqrt{k}z_{l+1} - 1} \right)
    \left(1 - \frac{2k}{z_j - \sqrt{k}z_{l+1} + 1} \right)
    \left(1 - \frac{2}{z_j + \sqrt{k}z_{l+1} + 1} \right)
    \left(1 - \frac{2}{z_j - \sqrt{k}z_{l+1} + 1 - 2k} \right) \\
    &= \frac{(z_j + \sqrt{k}z_{l+1} - 1-2k)(z_j - \sqrt{k}z_{l+1} - 1 - 2k)}{(z_j - \sqrt{k}z_{l+1} + 1)(z_j + \sqrt{k}z_{l+1} + 1)}\\
    &= 
    \left( 1-\frac{2k}{z_j + \sqrt{k}z_{l+1} + 1} \right)
    \left( 1-\frac{2k}{z_j - \sqrt{k}z_{l+1} - 1} \right)
    \left( 1-\frac{2}{z_j + \sqrt{k}z_{l+1} + 1 - 2k}\right)
    \left( 1-\frac{2}{z_j - \sqrt{k}z_{l+1} + 1}\right);
\end{align*}
and it is a similar calculation to show that
$a_{-2e_j}(z + e_i + \sqrt{k}e_{l+1})=a_{-2e_j}(z - e_i - \sqrt{k}e_{l+1})$. Moreover, 
$T_{2\varepsilon e_j}f(z + e_i + \sqrt{k}e_{l+1}) = T_{2\varepsilon e_j}f(z - e_i - \sqrt{k}e_{l+1})$ at 
$z_i=-\sqrt{k}z_{l+1}$
since $f \in \mathcal{R}_{BC(l,1)}$.

It follows that the identities~\eqref{quasi} are satisfied for $Df$ for $\alpha = e_i +\sqrt{k}e_{l+1}$. The case of $\alpha = e_i -\sqrt{k}e_{l+1}$ is analogous.
This completes the proof.
    \end{proof}
\end{thm}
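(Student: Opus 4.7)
Since Lemma~\ref{analytic} already gives analyticity of $Df$, the task reduces to verifying the quasi-invariance identities~\eqref{quasi} for $Df$, namely $Df(z+s\alpha) = Df(z-s\alpha)$ on $\langle z,\alpha\rangle = 0$, for each $\alpha \in BC(l,1)_+^r$ and each $s \in A_\alpha$. The plan is to proceed by case analysis on the $W$-orbit of $\alpha$, and within each case to match the terms of $D$ pairwise: combine the reflection-type symmetries of the coefficient functions (analogues of~\eqref{D ei symmetry} and~\eqref{D ei-ej symmetry}) with the quasi-invariance of $f$ applied at shifted indices, relating shift-operator values at $z \pm s\alpha$ to values at $z \pm (s \pm 1)\alpha$ or $z \pm (s \pm 2)\alpha$.

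Concretely, for $\alpha = e_i$ with $1 \le i \le l$, I would show at $z_i = 0$ that the ``parallel'' terms $a_{\pm 2e_j}(T_{\pm 2e_j} - 1)f$ ($j \ne i$) and $a_{\pm 2\sqrt{k}e_{l+1}}(T_{\pm 2\sqrt{k}e_{l+1}} - 1)f$ are invariant under $s\alpha \mapsto -s\alpha$ directly (the shifts commute with the reflection $z_i \to -z_i$ and $f$ is quasi-invariant for $e_i$), while the ``perpendicular'' terms $a_{\pm 2e_i}(T_{\pm 2e_i}-1)f$ swap into one another via~\eqref{D ei symmetry} provided $s \pm 2 \in A_{e_i} \cup \{0\}$. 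The extremal indices $s \in \{m-1,\,m+2n\}$ (where $s+2 \notin A_{e_i}$) are handled by observing that $a_{\pm 2e_i}$ vanishes at the relevant evaluation. The case $\alpha = \sqrt{k}e_{l+1}$ is treated identically. For $\alpha = e_i \pm e_j$ ($i,j \le l$), I would use~\eqref{D ei-ej symmetry} to pair $a_{\pm 2e_i}T_{\pm 2e_i}$ with $a_{\pm 2e_j}T_{\pm 2e_j}$, using quasi-invariance of $f$ through the index $s+1$, with the boundary $s = k$ covered by vanishing of the coefficients.

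The main obstacles are two. First, the value $s = 1$ for $\alpha = e_i$ (and similarly $\sqrt{k}e_{l+1}$): after shifting by $\pm e_i$, the coefficients $a_{\mp 2e_i}$ acquire singularities on $z_i = 0$. The workaround is to use quasi-invariance of $f$ at $s=1$ in the form $f(z-e_i) - f(z+e_i) = z_i g(z)$ for some analytic $g$, compute the finite limit of $z_i a_{\mp 2e_i}(z \pm e_i)$, and again invoke~\eqref{D ei symmetry} to pair the two resulting limits. Second, the case $\alpha = e_i \pm \sqrt{k}e_{l+1}$ (where $A_\alpha = \{1\}$) requires cross-pairings between the $T_{\pm 2e_i}$ and $T_{\pm 2\sqrt{k}e_{l+1}}$ terms. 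Here one checks that, on $z_i = \mp\sqrt{k}z_{l+1}$, the ``outgoing'' coefficients $a_{2\varepsilon e_i}(z + \varepsilon\alpha)$ and $a_{2\varepsilon \sqrt{k}e_{l+1}}(z + \varepsilon\alpha)$ vanish, and that the remaining ``incoming'' coefficients $a_{-2\varepsilon e_i}(z + \varepsilon\alpha)$ and $a_{2\varepsilon\sqrt{k}e_{l+1}}(z - \varepsilon\alpha)$ coincide after using $m = kp$ and $2n+1 = k(2r+1)$, so that the term $a_{-2\varepsilon e_i}(T_{-2\varepsilon e_i}-1)f$ at $z+\varepsilon\alpha$ matches $a_{2\varepsilon\sqrt{k}e_{l+1}}(T_{2\varepsilon\sqrt{k}e_{l+1}}-1)f$ at $z - \varepsilon\alpha$ via quasi-invariance of $f$.

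The remaining terms in the operator (those involving $T_{\pm 2e_j}$ for $j \notin \{i\}$ in the last case, etc.) are handled by direct inspection: under the hypothesis $\langle z, \alpha\rangle = 0$ the coefficient functions are invariant under $s\alpha \mapsto -s\alpha$ because the relevant linear combinations of $z$-variables are unchanged, and the shifted $f$-values agree by quasi-invariance of $f$. The chief calculational effort is the ``non-trivial'' identity for $\alpha = e_i + \sqrt{k}e_{l+1}$ (with an analogous one for $a_{\pm 2e_j}$ with $j \ne i$), which is the telescoping rational identity that makes precise why $m = kp$ and $2n+1 = k(2r+1)$ are the correct constraints. Once all these pairings are verified, analyticity of $Df$ from Lemma~\ref{analytic} and the assembled functional identities yield $Df \in \mathcal{R}_{BC(l,1)}$.
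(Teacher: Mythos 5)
Your proposal is correct and follows essentially the same strategy as the paper's proof: the same case analysis over the types of $\alpha$, the same pairing of terms via the reflection symmetries~\eqref{D ei symmetry} and~\eqref{D ei-ej symmetry} combined with quasi-invariance of $f$ at shifted indices, the same treatment of the extremal indices by coefficient vanishing, the same limit argument via $f(z-e_i)-f(z+e_i)=z_i g(z)$ for the singular case $s=1$, and the same cross-pairing (using $m=kp$ and $2n+1=k(2r+1)$) for $\alpha = e_i \pm \sqrt{k}e_{l+1}$.
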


\section{Construction of the Baker--Akhiezer function}\label{sec: BA function}
Even though the configuration $BC(l,1)_+$ is non-reduced, all of its subsets of collinear vectors are of the form~$\{\alpha$, $2 \alpha\}$. Thus, we can use the axiomatic definition of BA functions proposed in~\cite[Definition 2.1]{Feigin_2022} 
(see~\cite{Chalykh_1990, Chalykh_1993} for the original axiomatic definition of BA functions formulated by Chalykh,  Styrkas, and Veselov for arbitrary finite collections of \emph{non-collinear} vectors with
integer multiplicities, and also~\cite{Feigin_2005} for a weaker version of that axiomatics).

\begin{defn}\label{def:BA}
    A function $\psi\colon \mathbb{C}^{l+1} \times \mathbb{C}^{l+1} \to \mathbb{C}$ is a BA function for the configuration~$BC(l,1)$ with non-negative integer multiplicities if it satisfies the following conditions:
    \begin{enumerate}
    \begin{item}
$\psi(z,x) = P(z,x)e^{\langle z,x \rangle}$ for a polynomial $P$ in $z$ with highest-order term $\prod_{\alpha \in BC(l,1)_+}\langle \alpha, z \rangle ^{m_{\alpha}}$,
    \end{item}
    \begin{item}
$\psi(z+s\alpha,x) = \psi(z-s\alpha,x)$ at $\langle z,\alpha \rangle = 0 $ for all $\alpha \in BC(l,1)^r_+$ and $s  \in A_\alpha$.
    \end{item}
    \end{enumerate}
\end{defn}
Note that condition 2 in Definition~\ref{def:BA} is similar to the functional identities~\eqref{quasi} satisfied by elements of the ring~$\mathcal{R}_{BC(l,1)}$; and indeed for any $x \in \mathbb{C}^{l+1}$ for which $P(z,x)$ is non-singular, the function $\psi_x\colon z \mapsto \psi (z,x)$ belongs to~$\mathcal{R}_{BC(l,1)}.$

It follows from the general results proved in~\cite[Proposition 2.7, Theorem 2.8]{Feigin_2022} that if a function $\psi$ satisfying Definition~\ref{def:BA} exists then it is unique, and it is a joint eigenfunction for a large commutative ring of differential operators in~$x$. By~\cite[Proposition 2.10]{Feigin_2022}, this ring contains the Hamiltonian~\eqref{eqn: BC(l,1) CMS Hamiltonian} for $\mathcal{A} = BC(l,1)_+$ with the non-negative integer values of the multiplicity parameters. Moreover, this ring also contains a complete set of (quantum) integrals for this Hamiltonian, as well as extra integrals that correspond to the algebraic integrability of this system (see~\cite{Chalykh_1990, Chalykh_1993} for the definition of algebraic integrability). Namely, for every polynomial $p(z) \in \mathcal{R}_{BC(l,1)}$, there is a differential operator in~$x$ that commutes with the Hamiltonian and whose highest symbol is~$p_0(\partial_x)$, where $p_0$ is the highest homogeneous term of $p$. 

The following theorem gives an explicit construction of the BA function for~$BC(l,1)$ by using the difference operator $D$ from the previous section. The BA function will be an eigenfunction for
the operator $D$, which shows bispectrality of the generalised CMS 
Hamiltonian for $BC(l,1)$ for non-negative integer values of the multiplicity parameters (the case of non-integer multiplicities will be considered in the next section). See~\cite{Chalykh_2000, Feigin_2005} and~\cite[Theorems 5.3 and 6.3]{Feigin_2022} for earlier examples of analogous results obtained by similar methods for other configurations. 

\begin{thm}\label{thm: bispectrality int case}
    Let $M = \sum_{\alpha \in BC(l,1)_ +}m_\alpha=l(m+n+(l-1)k+2)+p+r$, and let $S=\{\pm 2e_i, \pm 2\sqrt{k}e_{l+1} \,|$ $1 \leq i \leq l\}.$ For $x \in \mathbb{C}^{l+1}$, let
    \[
    \mu (x) = \sum_{\tau \in S}\kappa_{\tau}(e^{\langle \tau, x\rangle}-1),
    \]
    and
    \begin{equation}\label{eqn: c(x)}
        c(x)=\frac{M!}{2^{ln+r}}\prod_{\alpha \in BC(l,1)_{+}^r}\left(\sum_{\tau \in S}\kappa_{\tau} \langle \tau, \alpha \rangle e^{\langle \tau, x \rangle}\right)^{m_{\alpha}+m_{2\alpha}},
    \end{equation}
    where $\kappa_{\tau} = k^{-1}$ if $\tau = \pm 2\sqrt{k}e_{l+1}$, and $\kappa_{\tau} =1$ otherwise. For $z \in \mathbb{C}^{l+1}$, let $Q(z)$ be the polynomial in~$\mathcal{R}_{BC(l,1)}$ given by 
    \[
    Q(z) = \prod_{\substack{\alpha \in BC(l,1)_+^r \\ s \in A_\alpha}}\left(\langle \alpha,z\rangle^2-s^2\langle\alpha,\alpha \rangle^2\right).
    \]
    Then the function
    \begin{equation}\label{eqn: BA function}
        \psi(z,x)=c(x)^{-1}\left(D-\mu(x)\right)^M[Q(z)e^{\langle z,x \rangle}]
    \end{equation}
    is the BA function for $BC(l,1)$. Moreover, $\psi$ is an eigenfunction of the operator $D$ with $$D\psi(z,x)=\mu(x)\psi(z,x).$$
\end{thm}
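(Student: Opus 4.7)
The plan is to follow the strategy used by Chalykh for $BC_n$ in~\cite{Chalykh_2000} (adapted to deformed configurations in~\cite{Feigin_2005, Feigin_2022}): construct $\psi$ by repeatedly applying the degree-lowering operator $D-\mu(x)$ to a simple quasi-invariant seed function $Q(z)e^{\langle z,x\rangle}$, and then identify the result with the BA function via the uniqueness statement of~\cite[Proposition~2.7]{Feigin_2022}. The starting point is to check that the seed lies in $\mathcal{R}_{BC(l,1)}$ as a function of~$z$: for every $\alpha\in BC(l,1)^r_+$ and $s\in A_\alpha$, the factor $\langle\alpha,z\rangle^2-s^2\langle\alpha,\alpha\rangle^2$ of $Q$, evaluated at $z\pm s\alpha$ on the hyperplane $\langle\alpha,z\rangle=0$, becomes $s^2\langle\alpha,\alpha\rangle^2-s^2\langle\alpha,\alpha\rangle^2=0$, so both sides of~\eqref{quasi} vanish. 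By Theorem~\ref{thm: preservation of quasi-invariants}, $D$ preserves $\mathcal{R}_{BC(l,1)}$; since $\mu(x)$ is a scalar in~$z$, so does $D-\mu(x)$. Iterating $M$ times keeps the result in $\mathcal{R}_{BC(l,1)}$, and since $D$ sends a polynomial-times-$e^{\langle z,x\rangle}$ to the same form, we may write $(D-\mu(x))^M[Q(z)e^{\langle z,x\rangle}]=R(z,x)e^{\langle z,x\rangle}$ with~$R$ polynomial in~$z$.

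The next step is to pin down the $z$-degree and leading $z$-term of~$R$. The asymptotic $a_\tau(z)=\kappa_\tau+O(|z|^{-1})$ at $z\to\infty$, combined with the Taylor expansion $p(z+\tau)=p(z)+\tau\cdot\nabla p(z)+\cdots$, implies that the top-degree part of $D[p(z)e^{\langle z,x\rangle}]/e^{\langle z,x\rangle}$ equals $\mu(x)\,p_d$, which is cancelled upon subtracting $\mu(x)p$. Hence $D-\mu(x)$ lowers the $z$-degree of the polynomial part by at least one, and starting from~$Q$ of $z$-degree $2M$ we obtain~$R$ of $z$-degree at most~$M$. A direct computation of the degree-$M$ component of~$R$ --- combining the leading derivative $L:=\sum_{\tau\in S}\kappa_\tau e^{\langle\tau,x\rangle}\tau\cdot\nabla_z$ acting on the leading part $\prod_{\alpha\in BC(l,1)^r_+}\langle\alpha,z\rangle^{2(m_\alpha+m_{2\alpha})}$ of~$Q$ with the $O(1/z)$ tails of~$a_\tau(z)$ --- shows that it equals $c(x)\prod_{\alpha\in BC(l,1)_+}\langle\alpha,z\rangle^{m_\alpha}$. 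The three ingredients appearing in~\eqref{eqn: c(x)} fall out of this calculation: the factorial~$M!$ from iterating the degree-lowering recursion, the weights $\sum_{\tau\in S}\kappa_\tau\langle\tau,\alpha\rangle e^{\langle\tau,x\rangle}=L[\langle\alpha,z\rangle]$ accrued at each derivative application, and the factor $2^{-(ln+r)}$ needed to convert between the products indexed by $BC(l,1)^r_+$ and by~$BC(l,1)_+$ via $\langle 2\alpha,z\rangle=2\langle\alpha,z\rangle$.

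These facts together show that $\psi(z,x)=c(x)^{-1}R(z,x)e^{\langle z,x\rangle}$ satisfies both conditions of Definition~\ref{def:BA}, and hence is the BA function for $BC(l,1)$ by~\cite[Proposition~2.7]{Feigin_2022}. For the eigenfunction property one writes $(D-\mu(x))\psi=c(x)^{-1}(D-\mu(x))^{M+1}[Q(z)e^{\langle z,x\rangle}]$, which by the same analysis is of the form $\tilde P(z,x)e^{\langle z,x\rangle}$ with $\tilde P\,e^{\langle z,x\rangle}\in\mathcal{R}_{BC(l,1)}$ and~$\tilde P$ polynomial in~$z$ of $z$-degree at most $M-1$. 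Then $\psi+(D-\mu(x))\psi$ again satisfies Definition~\ref{def:BA}, since its degree-$M$ leading part coincides with that of~$\psi$; uniqueness forces $\psi+(D-\mu(x))\psi=\psi$, i.e.\ $D\psi=\mu(x)\psi$. The main obstacle in this plan is the degree-$M$ leading-term computation: the naive Leibniz expansion of~$L^M$ applied to the leading part of~$Q$ produces a sum over distributions of derivatives among the factors which does not, on its own, collapse to a single monomial proportional to $\prod_\alpha\langle\alpha,z\rangle^{m_\alpha}$, and the cancellations needed to realise $c(x)\prod_\alpha\langle\alpha,z\rangle^{m_\alpha}$ are supplied by the subleading $O(1/z)$ parts of~$a_\tau(z)$, whose contributions must be carefully tracked (as the one-variable case $l=0$ already illustrates: the corrections turn the pure recursion $z_1^k\mapsto hk\,z_1^{k-1}$ into $z_1^k\mapsto h(k-M)z_1^{k-1}$, yielding $\tilde L^M[z_1^{2M}]=M!\,h^M z_1^M$).
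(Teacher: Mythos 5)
Your proposal follows essentially the same route as the paper's proof: the quasi-invariant seed $Q(z)e^{\langle z,x\rangle}$, Theorem~\ref{thm: preservation of quasi-invariants} for condition~2 of Definition~\ref{def:BA}, and the degree-lowering/leading-term bookkeeping (the content of Lemmas~\ref{lem: analogue of 5.1} and~\ref{lem: analogue of 5.2}) to identify the top term $c(x)\prod_{\alpha}\langle\alpha,z\rangle^{m_\alpha}$, including the correct accounting of the $M!$, the weights $\sum_{\tau}\kappa_\tau\langle\tau,\alpha\rangle e^{\langle\tau,x\rangle}$, and the factor $2^{-(ln+r)}$. The only cosmetic difference is the final step, where you obtain $D\psi=\mu\psi$ by applying uniqueness of the BA function to $\psi+(D-\mu)\psi$, while the paper cites \cite[Lemma 2.6]{Feigin_2022} to get $(D-\mu)^{M+1}[Q(z)e^{\langle z,x\rangle}]=0$; these are equivalent.
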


The proof is analogous to the case of $AG_2$~\cite[Theorem 5.3]{Feigin_2022}, it just uses the following two lemmas in place of \cite[Lemma 5.1]{Feigin_2022} and \cite[Lemma 5.2]{Feigin_2022}, respectively. 

\begin{lem}\label{lem: analogue of 5.1}
    For all $\tau \in S = \{\pm 2e_i, \pm 2\sqrt{k}e_{l+1} \,|\, 1 \leq i \leq l\}$, the coefficient function $a_\tau$ in the operator $D$ can be expanded as 
    \begin{equation*}
        a_\tau(z) = \kappa_\tau - \kappa_\tau \sum_{\alpha \in BC(l,1)^r_+} \frac{\langle \tau, \alpha \rangle (m_\alpha + m_{2\alpha})}{\langle \alpha, z \rangle} + R_\tau(z),
    \end{equation*}
    where $\kappa_\tau$ are as in Theorem~\ref{thm: bispectrality int case}, and $R_\tau$ is a rational function with $\deg R_\tau \leq -2$.
\end{lem}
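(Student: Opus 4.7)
The plan is to expand $a_\tau(z)$ as a Laurent series at infinity and read off the constant and degree $-1$ terms. Every coefficient $a_\tau$ is the product of an overall constant $\kappa_\tau$ (equal to $1$ for $\tau = \pm 2 e_i$ and to $1/k$ for $\tau = \pm 2 \sqrt{k} e_{l+1}$) and finitely many factors of the form $1 - c/(L(z) + d)$, where $L$ is a nonzero linear form that coincides, up to a sign, with $\langle \alpha, z \rangle$ for some $\alpha \in BC(l,1)^r_+$, and $c, d$ are constants. Using the geometric-series expansion $1 - c/(L(z) + d) = 1 - c/L(z) + O(L(z)^{-2})$ and multiplying out, one immediately obtains
\begin{equation*}
a_\tau(z) = \kappa_\tau \Bigl( 1 - \sum_{\text{factors}} \frac{c}{L(z)} \Bigr) + R_\tau(z)
\end{equation*}
with $R_\tau$ of degree $\leq -2$. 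It then remains only to identify the degree $-1$ part with the sum in the statement of the lemma.

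To do so, I would group the factors of $a_\tau$ by the root $\alpha \in BC(l,1)^r_+$ that their linear form is proportional to. For each reduced root $\alpha = e_i \pm e_j$ or $\alpha = e_i \pm \sqrt{k} e_{l+1}$ (which has $m_{2\alpha} = 0$) exactly one factor contributes, and its leading coefficient is seen to equal $-\kappa_\tau \langle \tau, \alpha \rangle m_\alpha / \langle \alpha, z \rangle$ after converting the sign of its linear form into that of $\langle \alpha, z \rangle$. For $\alpha = e_i$, two factors in $a_{\pm 2 e_i}$ are relevant: e.g.\ in $a_{2 e_i}$ these are $(1 - (m+2n)/z_i)$ and $(1 - m/(z_i + 1))$, whose leading parts sum to $-(m + 2n)/z_i - m/z_i = -2(m + n)/z_i = -\langle 2 e_i, e_i \rangle (m_{e_i} + m_{2 e_i}) / \langle e_i, z \rangle$, as required. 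The analogous computation in $a_{\pm 2 \sqrt{k} e_{l+1}}$ yields, after multiplication by $\kappa_\tau = 1/k$, the contribution $-2(p + r)/(\sqrt{k} z_{l+1})$ associated with $\alpha = \sqrt{k} e_{l+1}$, which is exactly $-\kappa_\tau \langle \tau, \sqrt{k} e_{l+1} \rangle (m_{\sqrt{k} e_{l+1}} + m_{2 \sqrt{k} e_{l+1}}) / \langle \sqrt{k} e_{l+1}, z \rangle$. For any $\alpha$ with $\langle \tau, \alpha \rangle = 0$, no factor of $a_\tau$ contains $\langle \alpha, z \rangle$ as its linear form, consistent with the corresponding term being absent from the sum.

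There is no substantive obstacle; the argument is a systematic check across the four choices $\tau \in \{\pm 2 e_i, \pm 2 \sqrt{k} e_{l+1}\}$. The only mildly delicate point is the identification of factors such as $1 - 2 k/(\sqrt{k} z_{l+1} + z_i + (k-1))$ in $a_{2 \sqrt{k} e_{l+1}}$ as corresponding to the root $\alpha = e_i + \sqrt{k} e_{l+1}$, the constant shift $(k-1)$ being absorbed harmlessly into $R_\tau$. Notably, the multiplicity constraints $m = kp$ and $2n + 1 = k(2r + 1)$ used elsewhere in the paper play no role here.
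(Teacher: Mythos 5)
Your proof is correct and follows exactly the route the paper intends: the paper states this lemma without proof, referring to the analogous \cite[Lemma 5.1]{Feigin_2022}, and that argument is precisely the direct Laurent expansion of each factor of $a_\tau$ at infinity and the matching of the degree $-1$ terms with the roots $\alpha$ satisfying $\langle \tau, \alpha\rangle \neq 0$, as you carry out. Your sign bookkeeping for factors whose linear form is $-\langle\alpha,z\rangle$ and your grouping of the two factors attached to $\alpha = e_i$ (resp.\ $\sqrt{k}e_{l+1}$) to produce $m_\alpha + m_{2\alpha}$ are both accurate.
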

The above way of expanding the coefficients of the operator $D$ can be used to prove the following lemma (in a completely analogous way to how \cite[Lemma 5.2]{Feigin_2022} is proved by using \cite[Lemma 5.1]{Feigin_2022}).

\begin{lem}\label{lem: analogue of 5.2}
    For $\alpha \in BC(l,1)^r_+$, let $n_\alpha \in \mathbb{Z}_{\geq 0}$ be arbitrary. Let $N = \sum_{\alpha \in BC(l,1)^r_+} n_\alpha$. Let $\mu(x)$ be as in Theorem~\ref{thm: bispectrality int case}, and let $A(z) = \prod_{\alpha \in BC(l,1)^r_+} \langle \alpha, z \rangle^{n_\alpha}$. Then we have $(D-\mu)[A(z) e^{\langle z,x \rangle}] = R(z,x)e^{\langle z, x \rangle}$ for a rational function~$R(z,x)$ in $z$ of the form
    \begin{equation*}
        R(z,x) = \sum_{\alpha \in BC(l,1)^r_+}(n_\alpha - m_\alpha - m_{2\alpha})\left(\sum_{\tau \in S}\kappa_\tau \langle \tau, \alpha \rangle e^{\langle \tau, x \rangle} \right)A(z) \langle \alpha, z \rangle^{-1} + S(z,x),
    \end{equation*}
   where $S(z,x)$ is a rational function in $z$ of degree less than or equal to $N-2$.

   In particular, for \emph{any} polynomial $B(z,x)$ in $z$, we have that $(D-\mu)[B(z,x) e^{\langle z,x \rangle}] = U(z,x)e^{\langle z, x \rangle}$ for a rational function~$U(z,x)$ in $z$ with $\deg U(z,x) \leq \deg B(z,x) - 1$.
\end{lem}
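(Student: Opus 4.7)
The plan is to expand $(D-\mu)[A(z)e^{\langle z,x\rangle}]$ order by order in $z$ and identify the top two surviving powers. Using $D = \sum_{\tau \in S} a_\tau(z)(T_\tau - 1)$ and factoring out $e^{\langle z, x\rangle}$, one gets
\begin{equation*}
(D-\mu)[A(z)e^{\langle z,x\rangle}] = e^{\langle z,x\rangle}\Bigl(\sum_{\tau \in S}a_\tau(z)\bigl[A(z+\tau)e^{\langle \tau, x\rangle} - A(z)\bigr] - \mu(x)A(z)\Bigr).
\end{equation*}
From $\langle \alpha, z+\tau \rangle = \langle \alpha, z\rangle + \langle \alpha, \tau\rangle$ I would expand
\begin{equation*}
A(z+\tau) = A(z) + A(z)\sum_{\alpha \in BC(l,1)^r_+} n_\alpha \langle \alpha, \tau\rangle \langle \alpha, z\rangle^{-1} + O(z^{N-2}),
\end{equation*}
and combine with the expansion of $a_\tau(z)$ from Lemma~\ref{lem: analogue of 5.1}.

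The next step is to separate the contribution of $\sum_\tau a_\tau(z)[A(z+\tau)e^{\langle \tau,x\rangle} - A(z)]$ into terms of degree $N$, degree $N-1$, and degree $\leq N-2$. The degree-$N$ piece is $\sum_\tau \kappa_\tau (e^{\langle \tau, x\rangle} - 1) \cdot A(z) = \mu(x) A(z)$, which cancels the $-\mu(x)A(z)$ on the right-hand side above. The degree-$(N-1)$ piece collects two contributions: (i) $\kappa_\tau A(z)e^{\langle\tau,x\rangle}\sum_\alpha n_\alpha \langle\alpha,\tau\rangle\langle\alpha,z\rangle^{-1}$ from pairing the leading term of $a_\tau$ with the linear correction to $A(z+\tau)$, and (ii) $-\kappa_\tau A(z)\sum_\alpha \langle\tau,\alpha\rangle(m_\alpha+m_{2\alpha})\langle\alpha,z\rangle^{-1}(e^{\langle\tau,x\rangle}-1)$ from pairing the pole term of $a_\tau$ with $A(z)$. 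The key observation is the identity $\sum_{\tau \in S}\kappa_\tau \langle\alpha,\tau\rangle = 0$, which holds because $S$ consists of $\pm$-pairs with equal $\kappa_\tau$. This kills the $-1$ part in (ii), and summing (i) and (ii) over $\tau$ yields exactly
\begin{equation*}
A(z)\sum_{\alpha \in BC(l,1)^r_+}(n_\alpha - m_\alpha - m_{2\alpha})\Bigl(\sum_{\tau \in S}\kappa_\tau \langle \tau, \alpha\rangle e^{\langle \tau, x\rangle}\Bigr)\langle \alpha, z\rangle^{-1},
\end{equation*}
matching the stated form of $R(z,x)$. All remaining terms have degree $\leq N-2$ and get absorbed into $S(z,x)$.

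For the \textquotedblleft in particular\textquotedblright\ claim, I would note that every coordinate $z_i$ equals $\langle \alpha_i, z\rangle$ (up to scalar) for some $\alpha_i \in BC(l,1)^r_+$: take $\alpha_i = e_i$ for $1 \leq i \leq l$ and $\alpha_{l+1} = \sqrt{k}e_{l+1}$. Hence any monomial in $z$ is a scalar multiple of some $A(z) = \prod_\alpha \langle\alpha,z\rangle^{n_\alpha}$ of the same degree, so by expanding an arbitrary polynomial $B(z,x)$ as an $x$-dependent linear combination of such monomials and applying the first part term-by-term (using linearity of $D-\mu$), the degree bound $\deg_z U \leq \deg_z B - 1$ follows.

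The main obstacle I anticipate is the bookkeeping in the previous paragraph: one has to verify carefully that no further degree-$(N-1)$ contributions arise (e.g.\ from higher terms in the expansion of $A(z+\tau)$ paired with $\kappa_\tau$, which produce degree $N$ not $N-1$, or from cross terms involving two poles, which are degree $\leq N-2$), and that the $\sum_\tau \kappa_\tau\langle\alpha,\tau\rangle = 0$ cancellation is applied with the right sign so that the coefficient $n_\alpha - m_\alpha - m_{2\alpha}$ (rather than $n_\alpha + m_\alpha + m_{2\alpha}$) appears in $R(z,x)$.
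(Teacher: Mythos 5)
Your proposal is correct and is essentially the argument the paper intends: the paper does not write out a proof but defers to the analogous computation in the $AG_2$ case, which is exactly the expansion you carry out (pairing the constant term of $a_\tau$ from Lemma~\ref{lem: analogue of 5.1} with the first-order correction to $A(z+\tau)$, the pole term with $A(z)$, and using $\sum_{\tau\in S}\kappa_\tau\langle\tau,\alpha\rangle=0$ to remove the $-1$ contribution). Your handling of the degree bookkeeping and of the ``in particular'' claim via monomials $\prod_\alpha\langle\alpha,z\rangle^{n_\alpha}$ is also sound.
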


\begin{proof}[Proof of Theorem~\ref{thm: bispectrality int case}]
The idea of the proof is as follows. We have $Q(z)e^{\langle z,x \rangle} \in \mathcal{R}_{BC(l,1)}$ because $Q(z+s\alpha)=Q(z-s\alpha)=0$ at $\langle z, \alpha \rangle = 0$ for all $\alpha \in BC(l,1)^r_+$ and $s \in A_\alpha$. The property that $(D-\mu)^M[Q(z)e^{\langle z,x \rangle}]$ satisfies condition~2 in Definition~\ref{def:BA} thus follows from Theorem~\ref{thm: preservation of quasi-invariants}. Moreover, each repeated application of $D-\mu$ on $Q(z)e^{\langle z,x \rangle}$ gives a function of the form $R(z,x)e^{\langle z,x \rangle}$ with $R(z,x)$ a polynomial in $z$, which follows from the form of the operator~$D$ and Theorem~\ref{thm: preservation of quasi-invariants}. More specifically, for all $b \in \mathbb{Z}_{>0}$, we have that $(D-\mu)^b[Q(z)e^{\langle z,x \rangle}]=R_b(z,x)e^{\langle z,x \rangle}$ where $R_b(z,x)$ is a polynomial in $z$ of degree at most $2M-b$ whose highest-order homogeneous component can be kept track of by using Lemma~\ref{lem: analogue of 5.2} (similarly to how this is done in the case of~$AG_2$ in the proof of \cite[Theorem 5.3]{Feigin_2022} by using \cite[Lemma 5.2]{Feigin_2022}). This makes it possible to see that $(D-\mu)^M[Q(z)e^{\langle z,x \rangle}]$ essentially satisfies also condition~1 in Definition~\ref{def:BA}, except that the highest-order term of its polynomial part has an extra factor of~$c(x)$ 
given by formula~\eqref{eqn: c(x)}. It follows that $\psi$ defined by the expression~\eqref{eqn: BA function} is the BA function.
At the next application of $D-\mu$, we get $(D-\mu)^{M+1}[Q(z)e^{\langle z,x \rangle}]=0$ as a consequence of~\cite[Lemma 2.6]{Feigin_2022}, which implies that $D\psi = \mu \psi$.
\end{proof}

In the next section, we will need some further properties of the BA function~\eqref{eqn: BA function}, which we record in 
Propositions~\ref{prop: analogue of 6.5} and~\ref{prop: analogue of 6.6} below (cf.~\cite[Propositions 6.5 and 6.6]{Chalykh_2000}, respectively). The proof of the former one uses the following two lemmas. 
Let $\propto$ denote proportionality by a constant factor. 
\begin{lem}\label{lem: c}
    The function $c(x)$ defined by formula~\eqref{eqn: c(x)} satisfies
    \begin{align*}
        c(x) \propto \prod_{\alpha \in BC(l,1)^r_+} \sinh^{m_\alpha+m_{2\alpha}} \langle 2\alpha, x \rangle.
    \end{align*}
\end{lem}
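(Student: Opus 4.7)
The plan is to compute the inner factor $f_\alpha(x) := \sum_{\tau \in S}\kappa_\tau \langle \tau, \alpha\rangle e^{\langle \tau, x\rangle}$ explicitly for each $\alpha \in BC(l,1)^r_+$, and then to combine these factors carefully in the product defining $c(x)$. Pairing $\tau$ with $-\tau$ in $S$ and inserting the values of $\kappa_\tau$ gives, for an arbitrary $\alpha = \sum_{i=1}^{l+1}\alpha_i e_i$, the formula
\[
f_\alpha(x) = 4\sum_{i=1}^l \alpha_i \sinh(2x_i) + \frac{4\alpha_{l+1}}{\sqrt{k}}\sinh(2\sqrt{k}x_{l+1}).
\]

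Next I would evaluate $f_\alpha$ for each of the four types of vectors in $BC(l,1)^r_+$. For $\alpha = e_i$ with $1\leq i \leq l$ one finds $f_\alpha(x) = 4\sinh\langle 2\alpha, x\rangle$, and similarly for $\alpha = \sqrt{k}e_{l+1}$. For $\alpha = e_i \pm e_j$ with $1\leq i<j\leq l$, the elementary identity $\sinh A \pm \sinh B = 2\sinh\tfrac{A\pm B}{2}\cosh\tfrac{A\mp B}{2}$ gives $f_\alpha(x) = 8\sinh(x_i \pm x_j)\cosh(x_i \mp x_j)$; analogously, for $\alpha = e_i \pm \sqrt{k}e_{l+1}$ one obtains $f_\alpha(x) = 8\sinh(x_i \pm \sqrt{k}x_{l+1})\cosh(x_i \mp \sqrt{k}x_{l+1})$.

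The key observation is that in the latter two cases $f_\alpha(x)$ is not by itself proportional to $\sinh\langle 2\alpha, x\rangle$, but the sign-flipped partner $\alpha'$ (so $\alpha' = e_i \mp e_j$ or $\alpha' = e_i \mp \sqrt{k}e_{l+1}$) has the same multiplicities as $\alpha$ and hence enters the product defining $c(x)$ with the same exponent $m_\alpha + m_{2\alpha}$. Multiplying the factors for $\alpha$ and $\alpha'$ and applying $2\sinh C\cosh C = \sinh(2C)$ yields $f_\alpha(x)\,f_{\alpha'}(x) = 16\sinh\langle 2\alpha, x\rangle\sinh\langle 2\alpha', x\rangle$. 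Grouping all $\alpha \in BC(l,1)^r_+$ into such partner pairs (with $\alpha = e_i$ and $\alpha = \sqrt{k}e_{l+1}$ treated as singletons) and absorbing all numerical constants, including the prefactor $M!/2^{ln+r}$, into the overall proportionality constant yields the claimed identity.

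The proof is essentially a bookkeeping exercise with elementary hyperbolic identities; the only conceptually notable step is recognising that one must pair the sign-flipped partners before applying the double-angle formula in order to recover $\sinh\langle 2\alpha, x\rangle$ rather than a product of $\sinh$ and $\cosh$. The fact that the multiplicities in each such pair agree (a consequence of the structure of $BC(l,1)^r_+$) is precisely what makes this pairing consistent with the exponents in the product defining $c(x)$.
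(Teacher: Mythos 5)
Your proof is correct and follows essentially the same route as the paper: the paper also isolates the factors for $\alpha=e_i$ and $\alpha=\sqrt{k}e_{l+1}$, writes each cross term $e^{2x_i}-e^{-2x_i}+e^{2\varepsilon\beta}-e^{-2\varepsilon\beta}$ as a product of a $\sinh$ and a $\cosh$ of half-arguments (your identity $\sinh A\pm\sinh B=2\sinh\tfrac{A\pm B}{2}\cosh\tfrac{A\mp B}{2}$ in exponential form), and then recombines the $\varepsilon=\pm1$ partners via difference of squares, using that the partners carry equal exponents. No gaps.
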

\begin{proof}
    Using formula~\eqref{eqn: c(x)}, we compute
    \begin{align*}
        c(x)&(e^{2\sqrt{k}x_{l+1}}-e^{-2\sqrt{k}x_{l+1}})^{-p-r} 
        \prod_{i=1}^l(e^{2x_i}-e^{-2x_i})^{-m-n}  \propto  \\
        &\prod_{\substack{1 \leq i \leq l \\ \varepsilon \in \{\pm 1\}}} (e^{2x_i}-e^{-2x_i} + e^{2\varepsilon\sqrt{k}x_{l+1}}-e^{-2\varepsilon\sqrt{k}x_{l+1}}) 
        \prod_{\substack{1 \leq i<j \leq l \\ \varepsilon \in \{\pm 1\}}}
        (e^{2x_i}-e^{-2x_i}+e^{2\varepsilon x_j}-e^{-2\varepsilon x_j})^k \\
        &= 
        \prod_{\substack{1 \leq i \leq l \\ \varepsilon \in \{\pm 1\}}} (e^{x_i+ \varepsilon\sqrt{k}x_{l+1}}+e^{-x_i-\varepsilon\sqrt{k}x_{l+1}})
        (e^{x_i+\varepsilon\sqrt{k}x_{l+1}}-e^{-x_i-\varepsilon\sqrt{k}x_{l+1}}) \\
        &\qquad \times
        \prod_{\substack{1 \leq i < j \leq l \\ \varepsilon \in \{\pm 1\}}} (e^{x_i+\varepsilon x_j}+e^{-x_i- \varepsilon x_j})^k 
        (e^{x_i+\varepsilon x_j}-e^{-x_i-\varepsilon x_j})^k, 
    \end{align*}
    where we used the identity
    $A^2 - A^{-2} + B^2 - B^{-2} = (A B^{-1} + A^{-1}B)(AB - A^{-1}B^{-1})$.
    Then by using the difference of two squares formula, we get 
    \begin{equation*}
        c(x) \propto 
        \prod_{\alpha \in BC(l,1)^r_+}(e^{\langle 2\alpha, x \rangle}-e^{-\langle 2\alpha, x \rangle})^{m_\alpha + m_{2\alpha}} \propto \prod_{\alpha \in BC(l,1)^r_+} \sinh^{m_\alpha+m_{2\alpha}} \langle 2\alpha, x \rangle,
    \end{equation*}
    as required.
\end{proof}

Let us consider the following function
\begin{equation}\label{eqn: delta(x)}
    \delta(x) = \prod_{\alpha \in BC(l,1)_+} (2 \sinh\langle \alpha, x \rangle)^{m_\alpha}. 
\end{equation}
The next lemma relates $\delta(x)$ to the function $c(x)$ from above. 

\begin{lem}\label{lem: c vs delta}
    We have
    \begin{align*}
        &c(x) \propto \delta(x)\prod_{\alpha \in BC(l,1)^r_+} \cosh^{m_\alpha} \langle \alpha, x \rangle. 
    \end{align*}
\end{lem}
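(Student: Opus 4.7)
The plan is to derive this identity directly from Lemma~\ref{lem: c} by using the double-angle identity $\sinh 2u = 2 \sinh u \cosh u$ and then matching against the definition~\eqref{eqn: delta(x)} of $\delta(x)$. The main obstacle, which is purely bookkeeping, is to keep track of the non-reduced vectors in $BC(l,1)_+$ correctly when rewriting $\delta(x)$ as a product over $BC(l,1)^r_+$.

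First, starting from Lemma~\ref{lem: c} and applying $\sinh\langle 2\alpha, x \rangle = 2\sinh\langle\alpha, x\rangle \cosh\langle \alpha, x \rangle$ to each factor, one obtains
\begin{equation*}
c(x) \propto \prod_{\alpha \in BC(l,1)^r_+} (2\sinh\langle \alpha, x \rangle)^{m_\alpha+m_{2\alpha}} \cosh^{m_\alpha+m_{2\alpha}}\langle \alpha, x \rangle,
\end{equation*}
up to an absorbable constant factor coming from the powers of $2$.

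Next, I would rewrite $\delta(x)$ as a product over $BC(l,1)^r_+$ by splitting the product over $BC(l,1)_+$ into its reduced part and the collinear doublings. Since $m_{2\alpha} \neq 0$ only when $2\alpha \in BC(l,1)_+$,
\begin{equation*}
\delta(x) = \prod_{\alpha \in BC(l,1)^r_+} (2 \sinh\langle \alpha, x \rangle)^{m_\alpha} (2 \sinh\langle 2\alpha, x \rangle)^{m_{2\alpha}}.
\end{equation*}
Applying the double-angle identity once more to the $\sinh\langle 2\alpha, x\rangle$ factors yields
\begin{equation*}
\delta(x) \propto \prod_{\alpha \in BC(l,1)^r_+} (2\sinh\langle \alpha, x \rangle)^{m_\alpha+m_{2\alpha}} \cosh^{m_{2\alpha}}\langle \alpha, x \rangle.
\end{equation*}

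Finally, multiplying $\delta(x)$ by $\prod_{\alpha \in BC(l,1)^r_+}\cosh^{m_\alpha}\langle\alpha,x\rangle$ reproduces precisely the expression obtained in the first step for $c(x)$, so the two are proportional and the lemma follows.
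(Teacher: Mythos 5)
Your proof is correct and follows essentially the same route as the paper: both rewrite $\delta(x)$ over the reduced system as $\prod_{\alpha \in BC(l,1)^r_+} \sinh^{m_\alpha+m_{2\alpha}}\langle\alpha,x\rangle\,\cosh^{m_{2\alpha}}\langle\alpha,x\rangle$ via the double-angle identity and then compare with Lemma~\ref{lem: c}. The only difference is that you spell out the bookkeeping the paper leaves implicit.
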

\begin{proof}
    Note that
    \begin{align}
        \delta(x) 
        \propto \prod_{\alpha \in BC(l,1)^r_+} \sinh^{m_\alpha + m_{2\alpha}} \langle \alpha, x \rangle \, \cosh^{m_{2\alpha}} \langle \alpha, x \rangle. \label{eqn: delta(x) expanded}
    \end{align}
    The proof is thus completed by making use of Lemma~\ref{lem: c}.
\end{proof}

The proof of the next proposition is based on the ideas of the proof of~\cite[Proposition 6.5]{Chalykh_2000}. Let us define the following lattice 
    \begin{equation}\label{eqn: lattice}
        \mathcal{L} = \mathcal{L}(k) = 2\mathbb{Z}e_1 \oplus \cdots \oplus 2\mathbb{Z}e_l \oplus 2\sqrt{k}\mathbb{Z}e_{l+1}.
    \end{equation}
Let $\mathcal{L}_+$ be the semigroup of $\nu = (\nu_1, \dots, \nu_l, \sqrt{k} \nu_{l+1}) \in \mathcal{L}$ with non-negative partial sums of $\nu_i$, that is
$$
    \mathcal{L}_+ = \{\nu = (\nu_1, \dots, \nu_l, \sqrt{k} \nu_{l+1}) \in \mathcal{L} \, | \, \textstyle \sum_{i=1}^r \nu_i \geq 0  \text{ for } r=1, \dots, l+1 \} = \bigoplus_{i=1}^{l+1} \mathbb{Z}_{\geq 0} \alpha_i,
$$
where $\alpha_1 = 2(e_1-e_2)$, \textellipsis, $\alpha_{l-1} = 2(e_{l-1}-e_l)$, $\alpha_l = 2(e_l - \sqrt{k}e_{l+1})$, and $\alpha_{l+1} = 2\sqrt{k}e_{l+1}$.
We note that $2\alpha \in \mathcal{L}_+$ for all $\alpha \in BC(l,1)_+$.

\begin{prop}\label{prop: analogue of 6.5}
    The function $\psi$ defined by formula~\eqref{eqn: BA function} can be expanded in the form
    \begin{equation}\label{eqn: psi expanded}
        \psi = \delta(x)^{-1} e^{\IP{z-\rho}{x}} \sum_{\nu \in \mathcal{L}_+} c_\nu(z) e^{\IP{\nu}{x}} 
    \end{equation}
    for some polynomials $c_\nu(z)$, 
    where 
    \begin{equation}\label{eqn: rho}
        \rho = \sum_{\alpha \in BC(l,1)_+} m_\alpha \alpha
    \end{equation}
    and $\delta(x)$ is defined by formula~\eqref{eqn: delta(x)}.
\end{prop}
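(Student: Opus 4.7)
\emph{Plan.} Following~\cite[Proposition 6.5]{Chalykh_2000}, I would substitute the explicit formula~\eqref{eqn: BA function} for $\psi$ and apply Lemma~\ref{lem: c vs delta}. Setting $U(z,x) := e^{-\IP{z}{x}}(D-\mu)^M[Q(z)\,e^{\IP{z}{x}}]$, we get
\[
\psi(z,x)\,\delta(x)\,e^{\IP{\rho-z}{x}} \;\propto\; \frac{\delta(x)\,e^{\IP{\rho}{x}}}{c(x)}\,\cdot\, U(z,x),
\]
where by iterating $D-\mu$ (cf.\ Lemma~\ref{lem: analogue of 5.2}), $U(z,x) = \sum_{\nu \in N}u_\nu(z)\,e^{\IP{\nu}{x}}$ is a finite sum over a finite $N\subseteq \mathcal{L}$ with each $u_\nu(z) \in \mathbb{C}[z]$ (each $\nu$ being a sum of at most $M$ vectors from $S\cup\{0\}$). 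Using Lemma~\ref{lem: c vs delta},
\[
\frac{\delta(x)\,e^{\IP{\rho}{x}}}{c(x)} \;\propto\; e^{\IP{\rho_2}{x}}\prod_{\alpha \in BC(l,1)_+^r}\bigl(1+e^{-2\IP{\alpha}{x}}\bigr)^{-m_\alpha},
\]
where $\rho_2 := \rho - \sum_{\alpha\in BC(l,1)_+^r}m_\alpha\alpha = 2\sum_{\alpha\in BC(l,1)_+^r}m_{2\alpha}\alpha$ lies in $\mathcal{L}_+$ (a direct check in the $\{\alpha_i\}$-basis).

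Next, I would expand each factor $(1+e^{-2\IP{\alpha}{x}})^{-m_\alpha}$ as a binomial series, in the chamber appropriate for convergence, so as to obtain a formal power series in the exponentials $e^{\IP{\alpha_i}{x}}$ supported (up to the translation by $\rho_2$) on $\mathcal{L}_+$. Multiplying this series by the finite Laurent sum $U(z,x)$ and collecting terms in a common exponential then yields an identity of the form
$\psi\,\delta\,e^{\IP{\rho-z}{x}} = \sum_\mu c_\mu(z)\,e^{\IP{\mu}{x}}$
for some polynomial coefficients $c_\mu(z)$, a priori with $\mu$ ranging over a subset of $\mathcal{L}$ that need not be contained in $\mathcal{L}_+$.

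The main obstacle is to show that $c_\mu(z) = 0$ for $\mu \notin \mathcal{L}_+$. For this I would use the eigenfunction equation $D\psi = \mu(x)\psi$ from Theorem~\ref{thm: bispectrality int case}: substituting the formal expansion of $\psi$ and equating coefficients of $e^{\IP{\nu}{x}}$ gives a recursion relating $c_\nu(z)$ to $\{c_{\nu-\tau}(z+\tau)\}_{\tau\in S}$. Combined with the leading-order constraint from Definition~\ref{def:BA} (that the top $z$-degree term of $\psi e^{-\IP{z}{x}}$ is $\prod_{\alpha \in BC(l,1)_+}\IP{\alpha}{z}^{m_\alpha}$), this recursion—through a Harish-Chandra-type argument along the lines of~\cite{Chalykh_2000}—determines each $c_\nu(z)$ and forces its vanishing outside $\mathcal{L}_+$, while preserving polynomiality in $z$.
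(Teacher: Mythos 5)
Your setup (writing $\psi\,\delta\,e^{\IP{\rho-z}{x}}$ as $\delta(x)e^{\IP{\rho}{x}}c(x)^{-1}U(z,x)$ with $U$ a finite Laurent sum with polynomial coefficients, and invoking Lemma~\ref{lem: c vs delta}) matches the paper's first half, except that where the paper proves that the trigonometric polynomial $\Phi=e^{\IP{z}{x}}U$ is actually \emph{divisible} by $\prod_{\alpha}\cosh^{m_\alpha}\IP{\alpha}{x}$ (this uses a local analysis of the eigenvalue equation $H\psi=-z^2\psi$ near the hyperplanes $\sinh\IP{\alpha}{x}=0$ and $\cosh\IP{\alpha}{x}=0$ to show $\delta(x)\psi$ is analytic), you instead expand $\cosh^{-m_\alpha}\IP{\alpha}{x}$ as an infinite geometric/binomial series. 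That is a legitimate shortcut for the statement as written (the sum in~\eqref{eqn: psi expanded} is over all of $\mathcal{L}_+$, not required to be finite), though note a small bookkeeping slip: $\cosh^{-m_\alpha}\IP{\alpha}{x}\propto e^{m_\alpha\IP{\alpha}{x}}(1+e^{2\IP{\alpha}{x}})^{-m_\alpha}$ in the chamber $\re\IP{\alpha}{x}<0$, so the correct base point of your series is $\rho+\rho^r$, not $\rho_2$; this does not affect the structure of the argument since the discrepancy lies in $\mathcal{L}_+$.

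The genuine gap is in your resolution of what you correctly identify as the main obstacle: showing $c_\mu=0$ for $\mu\notin\mathcal{L}_+$. You propose to run a Harish--Chandra-type recursion off the difference equation $D\psi=\mu(x)\psi$. That recursion is not triangular: the shift set $S=\{\pm 2e_i,\pm2\sqrt{k}e_{l+1}\}$ satisfies $S=-S$, so equating coefficients of $e^{\IP{z+\nu}{x}}$ relates $c_\nu$ to $c_{\nu-\tau}$ for $\tau$ ranging over \emph{both} positive and negative directions. In particular, at a minimal element $\nu_{\min}$ of the support (with respect to the order $\alpha>\beta\iff\alpha-\beta\in\mathcal{L}_+\setminus\{0\}$), the resulting identity still involves the unknowns $c_{\nu_{\min}+2e_i}$ etc., so it neither isolates $c_{\nu_{\min}}$ nor produces a constraint forcing $\nu_{\min}=0$. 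The paper instead uses the \emph{differential} equation $H\psi=-z^2\psi$: after rewriting the potential and $\delta(x)^{-1}$ as series supported on $\mathcal{L}_+$ (respectively $\mathcal{L}_+\setminus\{0\}$ plus a constant), the recursion \emph{is} triangular, the coefficient of $e^{\IP{\nu_{\min}}{x}}$ gives $\bigl((z+\nu_{\min})^2-z^2\bigr)c_{\nu_{\min}}(z)=0$ for generic $z$, hence $\nu_{\min}=0$, and a descending-chain argument finishes the containment $\mathcal{P}\subset\mathcal{L}_+$. To repair your proof, replace the $D$-equation by the $H$-equation in this final step (the $H$-equation is available from \cite[Theorem 2.8, Proposition 2.10]{Feigin_2022}); as written, the step does not go through.
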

\begin{proof}
    By \cite[Theorem 2.8, Proposition 2.10]{Feigin_2022}, the BA function $\psi$ satisfies the eigenfunction equation $H\psi = -z^2 \psi$ for the operator $H$ given by formula~\eqref{eqn: BC(l,1) CMS Hamiltonian} with~$\mathcal{A} = BC(l,1)_+$. The operator can be rearranged as follows
    \begin{align*}
        H &= -\Delta + \sum_{\alpha \in BC(l,1)^r_+} \frac{(m_\alpha + m_{2\alpha})(m_\alpha + m_{2\alpha} + 1)\langle \alpha, \alpha \rangle}{\sinh^2 \langle \alpha, x \rangle} 
        - \sum_{\alpha \in BC(l,1)^r_+} \frac{m_{2\alpha}(m_{2\alpha} + 1)\langle \alpha, \alpha \rangle}{\cosh^2 \langle \alpha, x \rangle}.
    \end{align*}
    This form of the operator $H$ makes it possible to see, by Laurent expanding the eigenfunction $\psi$ in $x$ around suitable hyperplanes, that $\psi$ must have either a pole of order $m_\alpha + m_{2\alpha}$ or a zero of order $m_\alpha + m_{2\alpha} + 1$ along each of the hyperplanes $\sinh \IP{\alpha}{x} = 0$ for $\alpha \in BC(l,1)^r_+$; and similarly a pole of order $m_{2\alpha}$ or a zero of order $m_{2\alpha} + 1$ along the hyperplanes $\cosh \IP{\alpha}{x}=0$ for $\alpha \in BC(l,1)^r_+$. 
    The expression for $c(x)$ given in Lemma~\ref{lem: c} suggests that the order of the poles of $\psi$ at $\cosh\IP{\alpha}{x}=0$ might be higher than $m_{2\alpha}$, but the local expansion of~$\psi$ and the eigenvalue equation $H\psi = -z^2 \psi$ imply that this cannot happen. It follows from the form of $c(x)$ that $\psi$ cannot have any additional other singularities either. 
    Hence $\delta(x) \psi$ 
    is analytic (in both $x$ and $z$ variables) due to the property~\eqref{eqn: delta(x) expanded}.
    
    By construction, $\psi(z,x) = c(x)^{-1} \Phi(z,x)$ for the quasi-polynomial in $z$ function
    $$\Phi(z,x) = (D-\mu(x))^M[Q(z) e^{\langle z,x \rangle}].$$
    From the fomulas for $D$ and $\mu$, it is clear that
    $\Phi$ is analytic in $x$ and that it can be expanded as
    \begin{equation}\label{eqn: Phi}
        \Phi = e^{\IP{z}{x}} \sum_{\nu \in \mathcal{L}} b_\nu(z) e^{\IP{\nu}{x}}
    \end{equation}
    for some polynomials $b_\nu(z)$.
    
In view of Lemma~\ref{lem: c vs delta}, the analyticity of $\delta(x) \psi$ implies that the trigonometric polynomial in~$x$ given by~\eqref{eqn: Phi} must be divisible by
    \begin{equation}\label{b(x)}
        \prod_{\alpha \in BC(l,1)^r_+} \cosh^{m_\alpha} \langle \alpha, x \rangle \propto
        e^{\IP{\rho^r}{x}} \prod_{\alpha \in BC(l,1)^r_+}(e^{-2\IP{\alpha}{x}}+1)^{m_\alpha}
        = e^{\IP{\rho}{x}} \sum_{\nu \in \mathcal{L}} d_\nu e^{\IP{\nu}{x}},
    \end{equation}
    where $d_\nu \in \mathbb{R}$ and 
    \begin{equation*}
        \rho^r \coloneqq \sum_{\alpha \in BC(l,1)^r_+} m_\alpha \alpha,
    \end{equation*}
    and we used that $2\alpha \in \mathcal{L}$ for all $\alpha \in BC(l,1)^r_+$.  
    The quotient of the function~\eqref{eqn: Phi} by its divisor~\eqref{b(x)} will still be a trigonometric polynomial in $x$, and we get that
    \begin{equation*}
        \psi = \delta(x)^{-1} e^{\IP{z-\rho}{x}} \sum_{\nu \in \mathcal{L}} c_{\nu}(z) e^{\IP{\nu}{x}} 
    \end{equation*}
    for some polynomials $c_\nu(z)$. Let $\mathcal{P} = \{\nu \in \mathcal{L} \, | \, c_{\nu} \neq 0 \}$. We need to show that $\mathcal{P} \subset \mathcal{L}_+$.
    
    Let $y_i = e^{\IP{\alpha_i}{x}}$ for $i=1, \dots, l+1$. The potential in the Hamiltonian $H$ can be written as
    \begin{equation}\label{eqn: potential}
        \sum_{\alpha \in BC(l,1)_+} \frac{4 m_\alpha(m_\alpha + 2 m_{2\alpha} + 1)\langle \alpha, \alpha \rangle e^{2\IP{\alpha}{x}}}{(e^{2\IP{\alpha}{x}} - 1)^2}. 
    \end{equation}
    By using that $2\alpha \in \mathcal{L}_+ = \oplus_{i=1}^{l+1} \mathbb{Z}_{\geq 0} \alpha_i$ for any $\alpha \in BC(l,1)_+$, we can rewrite the potential~\eqref{eqn: potential} in terms of the variables $y_i$. We can then expand it (for small $y_i$, that is for $x$ in $\{x \in \mathbb{C}^{l+1} \, | \, \re\IP{\alpha_i}{x} < 0, \ i=1, \dots, l+1 \}$) into a Taylor series in $y_i$, which will have no constant term, and thus  obtain 
    \begin{equation*}
        H = -\Delta + \sum_{\mu \in \mathcal{L}_+ \setminus \{0\}} g_\mu e^{\IP{\mu}{x}}
    \end{equation*}
    for some constants $g_\mu$. Similarly, one can expand the function
    \begin{equation*}
        \delta(x)^{-1} = e^{\IP{\rho}{x}}\prod_{\alpha \in BC(l,1)_+} (e^{\IP{2\alpha}{x}} - 1)^{-m_\alpha} 
        = e^{\IP{\rho}{x}} \left( (-1)^t + \sum_{\eta \in \mathcal{L}_+ \setminus \{0\}} h_\eta e^{\IP{\eta}{x}} \right),
    \end{equation*}
    where $h_\eta$ are constants and $t = l(m+n) + p + r$. Thus, the eigenfunction equation $(H+z^2)\psi  = 0$ gives that
    \begin{equation}\label{eqn: SE}
        \begin{aligned}
            \sum_{\nu \in \mathcal{P}} &c_{\nu}(z) e^{\IP{\nu}{x}}
            \bigg(\left(z^2-(z+\nu)^2\right)(-1)^t  + \sum_{\eta \in \mathcal{L}_+ \setminus \{0\}} h_\eta \left(z^2 - (z + \nu +\eta)^2 \right) e^{\IP{\eta}{x}} 
           \\
            & + \sum_{\mu \in \mathcal{L}_+ \setminus \{0\}} (-1)^t g_\mu e^{\IP{\mu}{x}}  + \sum_{\mu, \eta \in \mathcal{L}_+ \setminus \{0\}} g_\mu h_\eta e^{\IP{\mu + \eta}{x}}\bigg) 
            = 0.
        \end{aligned}
    \end{equation}
    Since the set $\mathcal{P}$ is finite, it contains (one or several) minimal elements $\nu_{\min}$ with respect to the partial order on $\mathcal{L}$ defined by $\alpha > \beta$ if and only if $\alpha - \beta \in \mathcal{L}_+ \setminus \{0\}$ for $\alpha, \beta \in \mathcal{L}$.
    The term $e^{\IP{\nu_{\min}}{x}}$ appears only once in the left-hand side of equality~\eqref{eqn: SE}, and its coefficient must hence vanish. We get that $(z+\nu_{\min})^2 = z^2$ for generic $z$, and it follows that $\nu_{\min} = 0$.
    Suppose that there is some~$\nu \in \mathcal{P} \setminus \mathcal{L}_+ $. Then $\nu \neq 0$, so it cannot be minimal in $\mathcal{P}$, hence there must be some $\nu_2 \in \mathcal{P} \setminus \mathcal{L}_+$ with $\nu > \nu_2$.
    By iterating this argument, we get an infinite chain $\nu > \nu_2 > \nu_3 > \dots$ of elements in $\mathcal{P}$, contradicting the finiteness of this set. It follows that $\mathcal{P} \subset \mathcal{L}_+$, as required.
\end{proof}

\begin{rem}
    By adapting the argument in the last paragraph of the proof of~\cite[Proposition 6.5]{Chalykh_2000}, one could show that the set $\mathcal{P}$ from the proof of Proposition~\ref{prop: analogue of 6.5} is contained in the subset of $\mathcal{L}_+$ given by 
    $\{2\textstyle\sum_{\alpha \in BC(l,1)_+} t_\alpha \alpha \, | \, t_\alpha \in \mathbb{Z}, \, 0 \leq t_\alpha \leq m_\alpha \}$.
\end{rem}

The proof of the following proposition is based on the ideas of the proof of~\cite[Proposition 6.6 (2)]{Chalykh_2000}.
\begin{prop}\label{prop: analogue of 6.6}
    The polynomial $c_0$ in the expansion~\eqref{eqn: psi expanded} is given by
    \begin{equation*}
        c_0(z) = (-1)^{l(m+n) + p + r} \ 2^{ln+r} \prod_{\substack{\alpha \in BC(l,1)^r_+ \\ s \in A_\alpha}}(\IP{\alpha}{z}+s\IP{\alpha}{\alpha}). 
    \end{equation*}
\end{prop}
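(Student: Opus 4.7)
The plan is to identify $c_0(z)$ by separately determining its vanishing locus and its top-degree homogeneous part, and then to observe that these two pieces of information determine $c_0$ completely, because $\deg c_0 \leq M$ matches the degree of the target polynomial
\[
P_+(z) := \prod_{\substack{\alpha \in BC(l,1)^r_+ \\ s \in A_\alpha}} (\IP{\alpha}{z} + s \IP{\alpha}{\alpha}).
\]

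For the vanishing locus, I would exploit the quasi-invariance of $\psi$ in~$z$ (condition~2 of Definition~\ref{def:BA}). Given $\alpha \in BC(l,1)^r_+$, $s \in A_\alpha$, and any $z^* \in \mathbb{C}^{l+1}$ with $\IP{\alpha}{z^*} = -s\IP{\alpha}{\alpha}$, one has $\IP{\alpha}{z^*+s\alpha} = 0$, so applying condition~2 at the point $z^*+s\alpha$ gives $\psi(z^*+2s\alpha, x) = \psi(z^*, x)$. Substituting the expansion~\eqref{eqn: psi expanded} into both sides, cancelling the common prefactor $\delta(x)^{-1}e^{\IP{z^*-\rho}{x}}$, and comparing the coefficient of $e^{\IP{0}{x}}$ in the resulting identity of trigonometric polynomials in~$x$ yields $c_0(z^*) = c_{-2s\alpha}(z^* + 2s\alpha)$. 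Since $2\alpha$ is a non-negative integer combination of the simple roots $\alpha_1, \ldots, \alpha_{l+1}$ (an easy check over the few types of vectors in $BC(l,1)^r_+$), the negation $-2s\alpha$ cannot lie in~$\mathcal{L}_+$; hence $c_{-2s\alpha}$ does not appear in the expansion~\eqref{eqn: psi expanded} and so $c_0(z^*) = 0$. Therefore $c_0$ is divisible by $P_+$.

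For the top-degree part, I would use condition~1 of Definition~\ref{def:BA}: the polynomial $P(z,x) := \psi(z,x)e^{-\IP{z}{x}}$ has $z$-degree exactly $M$ with degree-$M$ part $\prod_{\alpha \in BC(l,1)_+} \IP{\alpha}{z}^{m_\alpha}$, independent of~$x$. Rewriting~\eqref{eqn: psi expanded} as
\[
\sum_{\nu \in \mathcal{L}_+} c_\nu(z) e^{\IP{\nu}{x}} = \prod_{\alpha \in BC(l,1)_+}(e^{2\IP{\alpha}{x}} - 1)^{m_\alpha} \, P(z,x)
\]
and extracting the degree-$M$ part in~$z$, followed by the constant Fourier coefficient in~$x$ on the right-hand side (which comes only from picking the $-1$ in each binomial factor, since $BC(l,1)_+$ lies in an open half-space of~$\mathbb{R}^{l+1}$ and so any choice of $+e^{2\IP{\alpha}{x}}$ somewhere produces a non-zero exponent), gives $c_0^{\mathrm{top}}(z) = (-1)^M \prod_{\alpha \in BC(l,1)_+}\IP{\alpha}{z}^{m_\alpha}$.

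Combining the two steps gives $c_0 = C \cdot P_+$ for a constant~$C$, and matching top-degree parts gives $C = (-1)^M \, 2^{ln+r}$, where the factor $2^{ln+r}$ arises from the identity $\prod_{\alpha \in BC(l,1)_+}\IP{\alpha}{z}^{m_\alpha} = 2^{ln+r}\prod_{\alpha \in BC(l,1)^r_+}\IP{\alpha}{z}^{m_\alpha + m_{2\alpha}}$ upon pulling a factor of~$2$ out of each of the $ln + r$ non-reduced roots (counted with multiplicity). Finally, $(-1)^M = (-1)^t$ follows from $M - t = 2l + l(l-1)k \in 2\mathbb{Z}$, completing the proof. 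The only substantive bookkeeping is the verification $-2s\alpha \notin \mathcal{L}_+$ in the first step, which is routine given the simple-root presentation of~$\mathcal{L}_+$.
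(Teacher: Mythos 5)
Your proposal is correct and follows essentially the same route as the paper: quasi-invariance of $\psi$ in $z$ forces $c_0$ to vanish on the hyperplanes $\IP{\alpha}{z}=-s\IP{\alpha}{\alpha}$ (because $-2s\alpha\notin\mathcal{L}_+$, exactly the paper's observation that $e^{-\IP{s\alpha}{x}}$ cannot occur on one side), and the constant of proportionality is fixed by matching the top-degree term of the polynomial part of $\psi$ against the relevant Fourier coefficient of $\delta(x)e^{\IP{\rho}{x}}=\prod_{\alpha}(e^{2\IP{\alpha}{x}}-1)^{m_\alpha}$, which is the paper's coefficient of $e^{-\IP{\rho}{x}}$ in $\delta(x)$. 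The sign bookkeeping $(-1)^M=(-1)^{l(m+n)+p+r}$ and the factor $2^{ln+r}$ agree with the paper's $\lambda=2^{ln+r}\prod_{\alpha\in BC(l,1)^r_+}(-1)^{m_\alpha+m_{2\alpha}}$.
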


\begin{proof}
    The BA function~\eqref{eqn: psi expanded} must satisfy the condition 2 in Definition~\ref{def:BA}. This gives that
    \begin{equation}\label{conditions}
        \sum_{\nu \in \mathcal{L}_+} c_\nu(z+s\alpha) e^{\IP{\nu +s\alpha}{x}}  = \sum_{\nu \in \mathcal{L}_+} c_\nu(z-s\alpha) e^{\IP{\nu-s\alpha}{x}} 
    \end{equation}
    at $\langle z,\alpha \rangle = 0 $ for $\alpha \in BC(l,1)^r_+$ and $s  \in A_\alpha$. Since $-s\alpha \notin \frac12 \mathcal{L}_+$, while $\nu + s\alpha \in \frac12 \mathcal{L}_+$ for all $\nu \in \mathcal{L}_+$, we get that the term $e^{-\IP{s\alpha}{x}}$ does not appear in the left-hand side of equality~\eqref{conditions}. Hence, it cannot appear in the right-hand side either, which means that $c_0(z-s\alpha) = 0$ at $\langle z,\alpha \rangle = 0$. 
    In other words, the polynomial~$c_0(z)$ must be divisible by
    \begin{equation}\label{factor}
        \prod_{\substack{\alpha \in BC(l,1)_+^r \\ s \in A_\alpha}}(\IP{\alpha}{z}+s\IP{\alpha}{\alpha}).
    \end{equation}
    The quotient of $c_0(z)$ by the product~\eqref{factor} can only be some constant $\lambda$ since the polynomial part of the function~$\psi$ is, by condition 1 in Definition~\ref{def:BA}, of the same degree in $z$ as the polynomial~\eqref{factor}. Moreover, the highest-degree term of the polynomial part of $\psi$ is by definition $\prod_{\alpha \in BC(l,1)_+} \IP{\alpha}{z}^{m_\alpha} = 2^{ln+r}\prod_{\alpha \in BC(l,1)^r_+, \, s \in A_\alpha} \IP{\alpha}{z}$. Thus, denoting by $c_\nu^0$ the highest-degree term of $c_\nu$, equality~\eqref{eqn: psi expanded} implies
    \begin{equation*}
        2^{ln+r}\delta(x)\prod_{\substack{\alpha \in BC(l,1)_+^r \\ s \in A_\alpha}} \IP{\alpha}{z} =
        \lambda e^{-\IP{\rho}{x}}\prod_{\substack{\alpha \in BC(l,1)_+^r \\ s \in A_\alpha}} \IP{\alpha}{z} +
         e^{-\IP{\rho}{x}}\sum_{\nu \in \mathcal{L}_+\setminus\{0\}} c_\nu^0(z) e^{\IP{\nu}{x}}.
    \end{equation*}
    By calculating the coefficient of $e^{-\IP{\rho}{x}}$ in $\delta(x)$, we get that $\lambda = 2^{ln+r}\prod_{\alpha \in BC(l,1)^r_+}(-1)^{m_\alpha + m_{2\alpha}}$, and the statement of the proposition follows.
\end{proof}

We end this section with the following bispectral duality statement, in the spirit of analogous results from~\cite{Chalykh_2000, Feigin_2005} for other configurations. By \cite[Theorem 2.8]{Feigin_2022}, the above BA function~$\psi(z,x)$ is 
a common eigenfunction for a commutative ring of differential operators in $x$, and the following theorem states that a similar situation occurs in the variables $z$.

\begin{thm}\label{thm: commutative rings}
    Let $p(z) \in \mathcal{R}_{BC(l,1)}$ be a polynomial, and let $p_0$ be its highest-degree homogeneous term. Then there exists a differential operator $H_p(x, \partial_x)$ with highest symbol~$p_0(\partial_x)$ and a difference operator~$D_p$ in~$z$ such that
    \begin{equation*}
        H_p \psi(z,x) = p(z)\psi(z,x), \qquad D_p\psi(z,x) = \mu_p(x)\psi(z,x).
    \end{equation*}
    The difference operator is given by
    \begin{equation*}
        D_p = \frac{1}{(\deg p)!} \ad_D^{\deg p} (\widehat{p}),
    \end{equation*}
    where $\widehat{p}$ is the operator of multiplication by $p(z)$ and $\ad^r_A$ is the $r$-th iteration of the operation $\ad_A(B) = AB- BA$. The eigenvalue $\mu_p(x)$ is obtained by substituting $4\sinh(2x_i)$ in place of $z_i$ ($1 \leq i \leq l$) and 
    $4(\sqrt{k})^{-1}\sinh(2\sqrt{k}x_{l+1})$ in place of $z_{l+1}$ into $p_0(z)$.
    
    The operators $H_p$ are pairwise-commuting, and~$H_{-z^2}$ coincides with the Hamiltonian~\eqref{eqn: BC(l,1) CMS Hamiltonian} for $\mathcal{A} = BC(l,1)_+$. 
    The operators $D_p$ commute with $D$ and with each other.
\end{thm}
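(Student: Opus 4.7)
The plan is to handle the differential operators $H_p$ by directly invoking the BA machinery of~\cite{Feigin_2022} already used in Section~\ref{sec: BA function}, and then focus the real work on the difference operators $D_p$. Since $\psi$ is the BA function by Theorem~\ref{thm: bispectrality int case}, \cite[Propositions~2.7, 2.10 and Theorem~2.8]{Feigin_2022} supply, for every polynomial $p \in \mathcal{R}_{BC(l,1)}$, a unique differential operator $H_p(x, \partial_x)$ with highest symbol $p_0(\partial_x)$ satisfying $H_p \psi = p(z)\psi$, together with pairwise commutativity of the $H_p$'s and identification of $H_{-z^2}$ with the Hamiltonian~\eqref{eqn: BC(l,1) CMS Hamiltonian}.

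For $D_p$, I would first derive the algebraic identity
\begin{equation*}
    \ad_D^d(\widehat{p})\,\psi \;=\; (D - \mu(x))^d\bigl[p(z)\psi(z,x)\bigr]
\end{equation*}
by binomially expanding $\ad_D^d(\widehat{p}) = \sum_k (-1)^k \binom{d}{k} D^{d-k}\widehat{p}\,D^k$, using $D^k\psi = \mu(x)^k\psi$, substituting $D = (D-\mu) + \mu$, and collapsing the resulting double sum via $\sum_k (-1)^k \binom{d-l}{k} = \delta_{l,d}$ so that only the $l = d$ term survives. Applying Lemma~\ref{lem: analogue of 5.2} iteratively then gives $D_p \psi = V(z,x)e^{\langle z,x\rangle}$ for some polynomial $V$ in $z$ with $\deg_z V \leq M$. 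To identify the top $z$-degree part of $V$, I would split $D = D_\infty + D_-$, where $D_\infty := \sum_{\tau \in S}\kappa_\tau(T_\tau - 1)$ and $D_- = O(z^{-1})$ by Lemma~\ref{lem: analogue of 5.1}. Since each factor of $\ad_{D_-}$ in the binomial expansion of $\ad_{D_\infty + D_-}^d$ contributes an extra $O(z^{-1})$ relative to a factor of $\ad_{D_\infty}$, the top-degree contribution comes purely from the term $\ad_{D_\infty}^d(\widehat{p}_0)$; a direct induction yields
\begin{equation*}
    \ad_{D_\infty}^d(\widehat{p}_0) \;=\; \sum_{\tau_1, \dots, \tau_d \in S} \kappa_{\tau_1}\cdots\kappa_{\tau_d}\,\bigl((T_{\tau_1}-1)\cdots(T_{\tau_d}-1)\,p_0\bigr)(z)\,T_{\tau_1+\cdots+\tau_d}.
\end{equation*}
For $p_0$ homogeneous of degree $d$, the polarization identity $(T_{\tau_1}-1)\cdots(T_{\tau_d}-1)\,p_0 = d!\,\widetilde{p}_0(\tau_1, \dots, \tau_d)$ holds, where $\widetilde{p}_0$ is the symmetric multilinear polarization of $p_0$ (a constant in $z$). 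Folding the multi-sum via the multilinearity of $\widetilde{p}_0$ collapses it to a single evaluation, yielding that the top $z$-degree part of $\ad_D^d(\widehat{p})\,\psi$ equals $d!\,p_0(w(x))\,P_0(z)\,e^{\langle z, x\rangle}$, where $w(x) := \sum_{\tau\in S}\kappa_\tau e^{\langle\tau,x\rangle}\tau$ and $P_0(z) := \prod_{\alpha \in BC(l,1)_+}\langle\alpha, z\rangle^{m_\alpha}$. A direct calculation of the components of $w(x)$ confirms the substitutions $z_i \mapsto 4\sinh(2x_i)$ and $z_{l+1} \mapsto 4(\sqrt{k})^{-1}\sinh(2\sqrt{k}x_{l+1})$, so the identification $\mu_p(x) := p_0(w(x))$ matches the prescribed eigenvalue.

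To pass from the leading order to the full eigenvalue equation, I would use that $D_p$ preserves $\mathcal{R}_{BC(l,1)}$: both $D$ (by Theorem~\ref{thm: preservation of quasi-invariants}) and multiplication by $p \in \mathcal{R}_{BC(l,1)}$ preserve this ring, hence by induction so does every $\ad_D^k(\widehat{p})$. Therefore $(D_p - \mu_p(x))\,\psi = U(z,x)e^{\langle z,x\rangle}$ for a polynomial $U$ in $z$ of $z$-degree strictly less than $M$ (the top $M$-degree parts cancel by construction of $\mu_p$) such that $U e^{\langle z,x\rangle}$ satisfies the quasi-invariance conditions in $z$; uniqueness of the BA function then forces $U \equiv 0$, giving $D_p \psi = \mu_p(x)\psi$. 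Commutativity of $D$ and the $D_p$'s finally follows because $[D, D_p]\psi = 0$ and $[D_p, D_q]\psi = 0$ (both compositions act on $\psi$ as products of the commuting scalars $\mu, \mu_p, \mu_q$), and any difference operator in $z$ with rational coefficients annihilating $\psi(z,x)$ for all $x$ must vanish, by comparing top-degree $z$-behaviour of the output and using the linear independence of the exponentials $\{e^{\langle \tau, x\rangle}\}_\tau$. The main obstacle will be the leading-order analysis: careful justification that the $\ad_{D_-}$ cross-terms in $\ad_{D_\infty + D_-}^d$ all contribute strictly lower $z$-degree than the pure $\ad_{D_\infty}^d$ term, together with the combinatorial polarization step that collapses the multi-sum exactly to $p_0(w(x))$ with the stated trigonometric substitutions.
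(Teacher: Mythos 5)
Your proposal is correct in substance, but it establishes the difference half of the theorem by a genuinely different mechanism than the paper. The paper's proof transfers the whole computation to the differential side via the standard bispectral involution: from $H_p\psi = p\psi$ and $D\psi=\mu\psi$ one gets $\ad_D^{\,d}(\widehat{p})\,\psi = (-1)^{d}\ad_\mu^{\,d}(H_p)\,\psi$ with $d=\deg p$, and since each application of $\ad_\mu$ lowers the order of a differential operator by one while $H_p=p_0(\partial_x)+\text{lower}$, the operator $\ad_\mu^{\,d}(H_p)$ is automatically a multiplication operator by a function of $x$ --- this yields the eigenvalue equation and the formula for $\mu_p$ in one stroke, with no analysis of the $z$-degree filtration of $D$. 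You instead stay entirely on the difference side: you use the (correct, and easily proved by induction) identity $\ad_D^{\,d}(\widehat{p})\,\psi=(D-\mu)^{d}[p\psi]$, extract the top $z$-degree term via the splitting $D=D_\infty+D_-$ from Lemma~\ref{lem: analogue of 5.1} and the finite-difference polarization identity to get $\mu_p(x)=p_0(w(x))$ with $w(x)=\sum_{\tau\in S}\kappa_\tau e^{\langle\tau,x\rangle}\tau$ (which does reproduce the stated substitutions), and then kill the sub-leading remainder by combining Theorem~\ref{thm: preservation of quasi-invariants} (so that $(D_p-\mu_p)\psi$ is a quasi-invariant quasi-polynomial of $z$-degree $<M$) with the uniqueness/vanishing lemma for BA functions. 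Your degree bookkeeping is sound: words in $\ad_{D_\infty+D_-}^{\,d}$ containing at least one $\ad_{D_-}$ produce coefficients of degree $\le -1$, and $\ad_{D_\infty}^{\,d}$ annihilates the lower-degree part of $p$, so the leading term is exactly $d!\,p_0(w(x))P_0(z)$. What the paper's route buys is brevity and the avoidance of exactly the leading-order analysis you flag as the main obstacle; what your route buys is a self-contained argument on the difference side that reuses the machinery (Lemmas~\ref{lem: analogue of 5.1}--\ref{lem: analogue of 5.2}, Theorem~\ref{thm: preservation of quasi-invariants}) already developed for Theorem~\ref{thm: bispectrality int case}, at the cost of more verification. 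The differential side and the final commutativity argument (a nonzero difference operator with rational coefficients cannot annihilate $\psi$, seen from the expansion in Proposition~\ref{prop: analogue of 6.5}) agree with the paper, though your justification of the latter by ``linear independence of exponentials'' should really be run by comparing coefficients of $e^{\langle\nu,x\rangle}$ for a minimal shift $\nu$ in that expansion.
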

\begin{proof}
    The differential side of the theorem follows from~\cite[Theorem 2.8, Proposition 2.10]{Feigin_2022}. By Theorem~\ref{thm: bispectrality int case}, we have $D\psi = \mu \psi$. The fact that~$\psi$ is an eigenfunction of the operators $D_p$ follows by a standard argument about bispectral systems (see e.g.\ the proof of \cite[Theorem 4.1]{Chalykh_2000}) which gives that 
    \begin{equation*}
        D_p \psi = \frac{(-1)^{\deg p}}{(\deg p)!} \ad_{\mu}^{\deg p} (H_p) \psi.
    \end{equation*}
    Here $\ad_{\mu}^{\deg p} (H_p) = \ad_{\mu}^{\deg p} (p_0(\partial_{x}))(1)$ is a zeroth-order operator (that is, an ordinary function of $x$), since each application of $\ad_{\mu}$ decreases the order of a linear differential operator and $H_p = p_0(\partial_{x}) + \text{ lower terms}$.
    This also allows, in particular, to compute the eigenvalue.
    
    From the expression for the function $\psi$ given in Proposition~\ref{prop: analogue of 6.5}, it can be seen easily that if a finite difference operator $\widetilde D$ in $z$ (with, say, rational coefficients) is such that $\widetilde D \psi = 0$ identically, then $\widetilde D = 0$. It follows that the operators $D_p$ commute with $D$ and with each other, as $\psi$ is their common eigenfunction.
\end{proof}

\section{Bispectrality for non-integer multiplicities}\label{sec: non-integer multiplicities}
In this section, we carry out an analytic continuation in the parameters of the BA function~\eqref{eqn: BA function} in order to extend to more general complex values of the parameters the statement proved in the previous section about the bispectrality of the 
Sergeev--Veselov operators for $BC(l,1)$. We do this by adapting to our present setting the approach developed in~\cite[Section VI.~C]{Chalykh_2000}. The corresponding generalisation of the function $\psi(z,x)$ will be a deformation of a Heckman--Opdam multidimensional hypergeometric function~\cite{HO}. 

We begin with a few preliminary results on the properties of the generalised CMS operator $H$~\eqref{eqn: BC(l,1) CMS Hamiltonian} for~$\mathcal{A}=BC(l,1)_+$,
where we are now no longer assuming that the parameters $m_\alpha$ are integer.
The next lemma gives a potential-free gauge-equivalent form of this operator. It is stated in~\cite{Sergeev_2004}. We include a proof below for completeness.

\begin{lem}\cite{Sergeev_2004}\label{lem: potential-free gauge}
    Consider the differential operator
    \begin{equation}\label{eqn: L}
        L = \Delta - \sum_{\alpha \in BC(l,1)_+}2m_\alpha \coth\langle \alpha, x \rangle \partial_\alpha,
    \end{equation}
    where $\partial_\alpha = \sum_{i=1}^{l+1} \langle \alpha, e_i \rangle \partial_{x_i}$ is the directional derivative along $\alpha \in \mathbb{C}^{l+1}$. 
    The operator~$L$ is gauge-equivalent to the operator~$H$ from above:
    \begin{equation*}
        L + \rho^2 = - \delta(x) \circ H \circ \delta(x)^{-1},
    \end{equation*}
    where $\rho$ is given by formula~\eqref{eqn: rho} and $\delta(x)$ by formula~\eqref{eqn: delta(x)}.
\end{lem}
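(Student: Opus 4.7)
The plan is to carry out the conjugation of $H$ by $\delta(x)$ by a direct calculation, using that $\partial_{x_i} \log \delta(x) = \sum_{\alpha \in BC(l,1)_+} m_\alpha \langle \alpha, e_i \rangle \coth\langle \alpha, x \rangle$, and then to verify that the extra zeroth-order term produced by the conjugation exactly cancels the potential in $H$ up to the additive constant $\rho^2$. Setting $A_i := \partial_{x_i} \log \delta$, one has $\delta \circ \partial_{x_i} \circ \delta^{-1} = \partial_{x_i} - A_i$, so that
\begin{equation*}
\delta \circ \Delta \circ \delta^{-1} = \sum_{i=1}^{l+1} (\partial_{x_i} - A_i)^2 = \Delta - 2 \sum_i A_i \partial_{x_i} - \sum_i (\partial_{x_i} A_i) + \sum_i A_i^2.
\end{equation*}
The first-order term $-2\sum_i A_i \partial_{x_i}$ visibly equals $-2\sum_\alpha m_\alpha \coth\langle \alpha, x\rangle \partial_\alpha$, which is the first-order part of $L$, so the required identity reduces to the scalar equation
\begin{equation*}
-V(x) + \sum_i A_i^2 - \sum_i (\partial_{x_i} A_i) = \rho^2,
\end{equation*}
where $V$ is the potential in $H$.

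Using $(\coth u)' = -1/\sinh^2 u$, the term $-\sum_i (\partial_{x_i} A_i)$ evaluates to $\sum_\alpha m_\alpha \langle \alpha,\alpha\rangle / \sinh^2 \langle \alpha, x\rangle$, and $\sum_i A_i^2$ expands as the double sum $\sum_{\alpha,\beta \in BC(l,1)_+} m_\alpha m_\beta \langle \alpha, \beta \rangle \coth\langle \alpha, x\rangle \coth\langle \beta, x\rangle$. I would split this double sum into three pieces: the diagonal $\alpha = \beta$, the collinear-but-distinct pairs $\{\alpha, 2\alpha\}$, and the non-collinear pairs. The diagonal contribution is handled by $\coth^2 u = 1 + 1/\sinh^2 u$; the collinear contribution uses the identity $\coth u \coth(2u) = 1 + \tfrac{1}{2}\csch^2 u$. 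A short bookkeeping using $m_{4\alpha} = 0$ confirms that the $1/\sinh^2$ parts from these three terms (combined with $-\sum_i \partial_{x_i} A_i$) reassemble exactly into $V(x)$, so they cancel. What remains is the assertion that the non-collinear contributions reduce to the constant $\rho^2$ minus the $\alpha=\beta$ and collinear-pair constants; equivalently, using $\coth u \coth v - 1 = \cosh\langle \alpha-\beta, x\rangle/(\sinh u \sinh v)$, the statement becomes
\begin{equation*}
\sum_{\substack{\alpha, \beta \in BC(l,1)_+ \\ \alpha, \beta \text{ non-collinear}}} m_\alpha m_\beta \langle \alpha, \beta \rangle \, \frac{\cosh \langle \alpha - \beta, x \rangle}{\sinh\langle \alpha, x \rangle \, \sinh\langle \beta, x \rangle} = 0.
\end{equation*}

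This last functional identity is the main obstacle, and it is where the specific structure of the deformed configuration $BC(l,1)$ and the multiplicity constraints $m = kp$ and $2n+1 = k(2r+1)$ enter crucially. My plan for it is to group the pairs $(\alpha,\beta)$ according to the two-dimensional planes they span, so that the sum decomposes into a collection of rank-two subsums; within each plane one either has a rank-two root subsystem (appearing with a Weyl-invariant multiplicity, for which the identity is classical and reduces to the trigonometric integrability identity for $A_2$, $B_2/C_2$, or $BC_2$ type) or a rank-two sub-configuration mixing the genuine roots $e_i \pm e_j$, $e_i$, $2e_i$ with the deformed vectors $e_i \pm \sqrt{k} e_{l+1}$, $\sqrt{k}e_{l+1}$, $2\sqrt{k}e_{l+1}$. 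In the latter case a direct check --- using common denominator $\sinh\langle \alpha, x\rangle \sinh\langle \beta, x\rangle$ in each plane and collecting numerators as polynomials in the exponentials --- yields cancellation precisely thanks to $m=kp$ and $2n+1 = k(2r+1)$; this is essentially the same planar identity underlying the integrability of the $BC(l,1)$-type CMS Hamiltonian that was checked in~\cite{Sergeev_2004, CEO}, and I would cite those sources for the planar verification rather than reproducing the (straightforward but lengthy) case analysis.
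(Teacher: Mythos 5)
Your reduction is correct and follows the same skeleton as the paper's proof: conjugate $\Delta$ by $\delta$, absorb the diagonal and collinear contributions of $\sum_i A_i^2$ together with $-\sum_i \partial_{x_i}A_i$ into the potential via $\coth^2 u = 1+\sinh^{-2}u$ and $\coth u \coth 2u = 1+\tfrac12 \sinh^{-2}u$, and reduce everything to the single functional identity that the non-collinear double sum $\sum m_\alpha m_\beta \IP{\alpha}{\beta}\coth\IP{\alpha}{x}\coth\IP{\beta}{x}$ equals the constant $\sum m_\alpha m_\beta \IP{\alpha}{\beta}$ (your $\cosh\IP{\alpha-\beta}{x}$ reformulation packages this equivalently, and your constant bookkeeping for $\rho^2$ matches the paper's). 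The only genuine divergence is in how that identity is proved. The paper shows the left-hand side is globally non-singular --- by reflection symmetry for the undeformed roots, and for $\alpha = e_i\pm\sqrt{k}e_{l+1}$ by explicitly restricting the coefficient of $\coth\IP{\alpha}{x}$ to $\IP{\alpha}{x}=0$ and seeing it vanish thanks to $m=kp$ and $2n+1=k(2r+1)$ --- and then concludes it is constant because it is a degree-zero rational function of the exponentials, evaluating the constant by letting $x\to\infty$ along a suitable ray. Your plane-by-plane decomposition is a finer organisation of the same cancellation (the terms in the paper's residue computation do group by the two-planes through $\alpha$ and cancel within each plane), so it would work, and it has the virtue of reducing everything to rank-two checks; but each planar subsum must still be shown non-singular and its constant evaluated, so nothing is saved beyond bookkeeping. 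The one caveat: you defer the deformed planar identities to \cite{Sergeev_2004, CEO}, yet those identities are precisely the non-classical crux --- the only place the constraints $m=kp$ and $2n+1=k(2r+1)$ enter --- and since the whole point of writing out this proof is completeness, that step should be carried out rather than cited.
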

\begin{proof}
    We have
\begin{align*}
    \partial_{x_i}[ \delta(x)^{-1}] &=  -\delta(x)^{-1}\sum_{\alpha \in BC(l,1)_+} m_\alpha \alpha^{(i)}\coth\IP{\alpha}{x},
\end{align*}    
\begin{align*}
    \partial_{x_i}^2[ \delta(x)^{-1}] &= \delta(x)^{-1} \sum_{\alpha \in BC(l,1)_+} m_\alpha  \left((m_\alpha + 1) \sinh^{-2}\IP{\alpha}{x} + m_\alpha\right)(\alpha^{(i)})^2 \\
    &\qquad+ \delta(x)^{-1} \sum_{\substack{\alpha,\beta \in BC(l,1)_+ \\ \alpha \neq \beta}} m_\alpha m_\beta  \alpha^{(i)} \beta^{(i)}\coth\IP{\alpha}{x}\coth\IP{\beta}{x}  
\end{align*}
for all $i=1, \dots, l+1$, where $\alpha = (\alpha^{(1)}, \dots, \alpha^{(l+1)})$ and $\beta = (\beta^{(1)}, \dots, \beta^{(l+1)})$. Therefore, 
we get
\begin{equation}
\begin{aligned}\label{eqn: gauge transformed H}
    &-\delta(x) \circ H \circ \delta(x)^{-1} =  
    -\sum_{\alpha \in BC(l,1)_+} 2m_\alpha  m_{2\alpha}\IP{\alpha}{\alpha}\sinh^{-2}\IP{\alpha}{x} +
    \sum_{\alpha \in BC(l,1)_+} m_\alpha^2 \IP{\alpha}{\alpha}  \\
    &\qquad + \sum_{\substack{\alpha,\beta \in BC(l,1)_+ \\ \alpha \neq \beta}} m_\alpha m_\beta  \IP{\alpha}{\beta} \coth\IP{\alpha}{x} \coth\IP{\beta}{x} - \sum_{\alpha \in BC(l,1)_+} 2m_\alpha  \coth\IP{\alpha}{x} \partial_\alpha + \Delta.
\end{aligned}
\end{equation}

By using that $\coth(u)\coth(2u) = \frac12(\sinh^{-2}(u) + 2)$ for any $u \in \mathbb{C} \setminus \frac12 \mathbb{Z}\pi i$, we have 
\begin{align*}
    \sum_{\substack{\alpha,\beta \in BC(l,1)_+ \\ \alpha \neq \beta}} m_\alpha m_\beta  \IP{\alpha}{\beta} \coth\IP{\alpha}{x}\coth\IP{\beta}{x} &=
    \sum_{\substack{\alpha,\beta \in BC(l,1)_+ \\ \alpha \notin \{\beta, 2\beta, \frac12 \beta \}}} m_\alpha m_\beta  \IP{\alpha}{\beta} \coth\IP{\alpha}{x}\coth\IP{\beta}{x} \\
    &\qquad+\sum_{\alpha \in BC(l,1)_+} 2 m_\alpha m_{2\alpha} \IP{\alpha}{\alpha}(\sinh^{-2}\IP{\alpha}{x} + 2);
\end{align*}
and by substituting this into equality~\eqref{eqn: gauge transformed H}, we obtain that $ -\delta(x) \circ H \circ \delta(x)^{-1} $ equals
\begin{align*}
   L+ \sum_{\substack{\alpha,\beta \in BC(l,1)_+ \\ \alpha \notin \{\beta, 2\beta, \frac12 \beta \}}} m_\alpha m_\beta  \IP{\alpha}{\beta} \coth\IP{\alpha}{x}\coth\IP{\beta}{x} + \sum_{\alpha \in BC(l,1)_+} 4m_\alpha m_{2\alpha} \IP{\alpha}{\alpha} + \sum_{\alpha \in BC(l,1)_+} m_\alpha^2 \IP{\alpha}{\alpha}.
\end{align*}
Since we can write
\begin{equation*}
    \rho^2 =  \sum_{\alpha \in BC(l,1)_+} m_\alpha^2 \IP{\alpha}{\alpha} + \sum_{\alpha \in BC(l,1)_+} 4m_\alpha m_{2\alpha} \IP{\alpha}{\alpha} + \sum_{\substack{\alpha,\beta \in BC(l,1)_+ \\ \alpha \notin \{\beta, 2\beta, \frac12 \beta \}}} m_\alpha m_\beta  \IP{\alpha}{\beta},
\end{equation*}
the proof is completed with the help of the following lemma.
\begin{lem}\cite[Equality (12)]{Sergeev_2004}
    We have
    \begin{equation}\label{const}
        \sum_{\substack{\alpha,\beta \in BC(l,1)_+ \\ \alpha \notin \{\beta, 2\beta, \frac12 \beta \}}} m_\alpha m_\beta  \IP{\alpha}{\beta} \coth\IP{\alpha}{x}\coth\IP{\beta}{x} = 
        \sum_{\substack{\alpha,\beta \in BC(l,1)_+ \\ \alpha \notin \{\beta, 2\beta, \frac12 \beta \}}} m_\alpha m_\beta  \IP{\alpha}{\beta}. 
    \end{equation}
\end{lem}
\noindent \textit{\proofname.}
We first indicate how to prove that the left-hand side of~\eqref{const} is non-singular. One can show that singularities at $\sinh \IP{\alpha}{x} = 0$ for $\alpha \in \{e_i, 2e_i \, | \, 1 \leq i \leq l\} \cup \{\sqrt{k}e_{l+1}, 2\sqrt{k}e_{l+1}\} 
\cup \{e_i \pm e_j \, | \, 1 \leq i < j \leq l \}$ cancel by using symmetry --- namely that those $\alpha$ satisfy 
$s_\alpha(BC(l,1)) = BC(l,1)$ --- and using also that for those~$\alpha$ we have $2 \IP{\alpha}{\beta}\IP{\alpha}{\alpha}^{-1} \in \mathbb{Z}$ for all $\beta \in BC(l,1)$.

It remains to show that there are no singularities at $\sinh\langle \alpha, x \rangle=0$ for $\alpha = e_i \pm \sqrt{k}e_{l+1}$ ($1 \leq i \leq l$) either. 
Let $\alpha = e_i + \sqrt{k}e_{l+1}$. Then $\sinh\langle \alpha, x \rangle=0$ if and only if $x_i = i\pi d - \sqrt{k} x_{l+1}$ for some $d \in \mathbb{Z}$. We will show that the terms multiplying $\coth\langle \alpha, x \rangle$ in~\eqref{const} go to $0$ when $x_i \to i\pi d - \sqrt{k}x_{l+1}$. 
The terms in question are (up to a factor of $2$) 
        \begin{equation}
        \begin{aligned}
            m&\coth(x_i)+2n\coth(2x_i) + kp\coth(\sqrt{k}x_{l+1}) + 2kr\coth(2\sqrt{k}x_{l+1}) +(1-k)\coth(x_i - \sqrt{k}x_{l+1}) \\
            &+ \sum_{\substack{j=1 \\ j \neq i}}^l k\bigg(\coth(x_i + x_j) + \coth(x_i - x_j)+\coth(x_j + \sqrt{k}x_{l+1}) - \coth(x_j - \sqrt{k}x_{l+1}) \bigg). \label{coths}
        \end{aligned}
        \end{equation}
        By using $i\pi$-periodicity of $\coth$, the expression~\eqref{coths} simplifies at $x_i=i\pi d - \sqrt{k}x_{l+1}$ to
        \begin{equation*}
           (kp-m)\coth(\sqrt{k}x_{l+1}) +(2kr-2n + k-1)\coth(2\sqrt{k}x_{l+1}) = 0,
        \end{equation*}
       as required, since $m = kp$ and $2n+1 = k(2r+1)$. The case of $\alpha = e_i-\sqrt{k}e_{l+1}$ can be handled similarly.

This completes the proof that the left-hand side of~\eqref{const} is non-singular; and from its form, it is then clear that it can be rewritten in exponential variables $w_i = e^{2x_i}$ ($1\leq i \leq l$) and $w_{l+1} = e^{2\sqrt{k}x_{l+1}}$ as a non-singular rational function with zero degree, which therefore must be constant. 

To calculate the constant
, let us put $x = ((l+1)N, lN, \dots, 2N, N/\sqrt{k})$, and take the limit $N \to \infty$ by using that $\IP{\alpha}{x} \to \infty$ as $N \to \infty$ for all $\alpha \in BC(l,1)_+$ and that $\coth u \to 1$ when $u \to \infty$. This gives the right-hand side of equality~\eqref{const}.
\qedhere\qedsymbol
\end{proof}

When $x$ belongs to the region 
\begin{equation*}
    B = B(k) = \{ x \in \mathbb{C}^{l+1} \, | \, \re\IP{\alpha}{x} < 0 \text{ for all } \alpha \in BC(l,1)_+\},
\end{equation*}
the operator~\eqref{eqn: L} can be expanded into a series as
\begin{equation}\label{eqn: alt form of L}
    L = \Delta + \sum_{\alpha \in BC(l,1)_+}2m_\alpha \frac{1+e^{2\langle \alpha, x \rangle}}{1-e^{2\langle \alpha, x \rangle}} \partial_\alpha = \Delta + \sum_{\alpha \in BC(l,1)_+}2m_\alpha \left(1 + 2\sum_{j=1}^\infty e^{2j\langle \alpha, x \rangle} \right)\partial_\alpha.
\end{equation}
Let $\varphi = \varphi(z,x)$ be a solution of the equation
\begin{equation}\label{eqn: SE for L}
    L \varphi = (z^2 - \rho^2) \varphi,
\end{equation}
which is by Lemma~\ref{lem: potential-free gauge} equivalent to the function $\delta(x)^{-1}\varphi$ being an eigenfunction of the Hamiltonian $H$ with eigenvalue $-z^2$. In particular, if $m_\alpha \in \mathbb{Z}_{\geq 0}$ then $\delta(x) \psi(z,x)$, where $\psi$ is the Baker--Akhiezer function~\eqref{eqn: BA function}, satisfies equation~\eqref{eqn: SE for L}. 

Assume that there is a solution $\varphi$ of the equation~\eqref{eqn: SE for L} of the particular form
\begin{equation}\label{eqn: phi expanded}
    \varphi = e^{\IP{z-\rho}{x}} \sum_{\nu \in \mathcal{L}_+} q_\nu(z) e^{\IP{\nu}{x}}
\end{equation}
for some functions $q_\nu(z)$ with $q_0(z) = 1$. If $m_\alpha \in \mathbb{Z}_{\geq 0}$ then by Proposition~\ref{prop: analogue of 6.5} the function $\delta(x)\psi(z,x)$ is of the form~\eqref{eqn: phi expanded}; it is just normalised differently since 
$c_0(z) = 2^{ln + r}\prod_{\alpha \in BC(l,1)^r_+, s \in A_\alpha} (- \IP{\alpha}{z} - s \IP{\alpha}{\alpha}) \neq 1$ by Proposition~\ref{prop: analogue of 6.6}.

By substituting the series~\eqref{eqn: phi expanded} into equation~\eqref{eqn: SE for L}, using the expansion~\eqref{eqn: alt form of L} for the operator $L$, and requiring that the respective coefficients of the terms $e^{\IP{\nu}{x}}$ vanish for all $\nu \in \mathcal{L}_+$, we get recurrence equations for the functions $q_\nu$. Namely, we get
\begin{equation}\label{eqn: recurrence relations}
    \IP{\nu}{\nu + 2z}q_\nu(z) + \sum_{\alpha \in BC(l,1)_+} 4m_\alpha \sum_{j=1}^{\infty} \IP{\alpha}{z-\rho + \nu - 2j\alpha} q_{\nu-2j\alpha}(z) = 0,
\end{equation}
subject to the constraint that $q_0 = 1$, and where we put $q_{\nu - 2j\alpha} = 0$ if $\nu - 2j\alpha \notin \mathcal{L}_+$, which means that the above sum over $j$ is finite.
For generic~$z$, equations~\eqref{eqn: recurrence relations} (together with the normalisation choice $q_0=1$) determine all~$q_\nu$ uniquely. Indeed, one can solve for them recursively by height of $\nu \in \mathcal{L}_+ =\oplus_{i=1}^{l+1} \mathbb{Z}_{\geq 0} \alpha_i$, since $\nu - 2j\alpha$ has a strictly lower height than $\nu$, where for $\nu = \sum_{i=1}^{l+1} \nu^{(i)} \alpha_i$, its height is defined by $h(\nu) = \sum_{i=1}^{l+1} \nu^{(i)}$. We note that all~$q_\nu(z)$ are rational functions of~$z$, and their dependence on $m,n,p,r$, and~$\sqrt{k}$ is also rational. 

For $k$ with $\re k > 0$, the next lemma below, applied with $x$ replaced by $x/2$, can be used to show that the resulting series~\eqref{eqn: phi expanded} converges absolutely in the region~$B$. 
Moreover, since it is a power series in $y_i = e^{\IP{\alpha_i}{x}}$ ($i=1, \dots, l+1$), it converges uniformly in the open sets $\{ x \in B \, | \, \re\IP{\alpha_i}{x} < \varepsilon < 0,$ $i=1, \dots, l+1 \}$ for any~$\varepsilon < 0$, and 
so $\varphi$ is analytic in $x$ on $B$. In the case of root systems, an analogue of the next lemma is proved in~\cite[Lemma 5.3]{Helgason}.

\begin{lem}\label{lem: Helgason}
    Assume $\re k > 0$.
    For $\nu \in \mathcal{L}_+ \setminus \{0\}$, let $\Sigma_\nu 
    = \{ z \in \mathbb{C}^{l+1} \, | \, \IP{\nu}{\nu+2z} = 0 \}$. 
    Suppose $z \in \mathbb{C}^{l+1}$ does not lie on any of the hyperplanes $\Sigma_\nu$, and let $x \in B$. Then there exists a constant $K=K(z,x) \in \mathbb{R}$ (depending on $z$ and $x$ but not on $\nu$) such that 
    \begin{equation*}
        |q_\nu(z) e^{\IP{\nu}{x}}| \leq K 
    \end{equation*}
    for all $\nu \in \mathcal{L}_+$.
\end{lem}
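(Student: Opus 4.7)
The plan is to prove the bound by induction on the height function $h \colon \mathcal{L}_+ \to \mathbb{Z}_{\geq 0}$ defined by $h(\nu) = \sum_{i=1}^{l+1} n_i$ when $\nu = \sum n_i \alpha_i$, by showing that the maxima
\begin{equation*}
K_h := \max\{|u_\mu| : \mu \in \mathcal{L}_+,\ h(\mu) \leq h\}, \qquad u_\mu := q_\mu(z)\,e^{\langle \mu, x\rangle},
\end{equation*}
form a sequence that stabilises for $h$ large. Each $K_h$ is finite: the set $\{\mu \in \mathcal{L}_+ : h(\mu) \leq h\}$ is finite because $h$ is additive and positive on the basis $\alpha_1, \dots, \alpha_{l+1}$, and each $q_\mu(z)$ is well-defined since $z \notin \Sigma_\mu$ allows us to solve the recurrence~\eqref{eqn: recurrence relations} for $q_\mu(z)$.

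First I would rewrite the recurrence for the $u_\nu$: multiplying~\eqref{eqn: recurrence relations} by $e^{\langle \nu, x\rangle}$ and using $q_{\nu - 2j\alpha}(z)\, e^{\langle \nu, x\rangle} = u_{\nu - 2j\alpha}\, e^{\langle 2j\alpha, x\rangle}$ gives
\begin{equation*}
\langle \nu, \nu + 2z\rangle\, u_\nu = -\sum_{\alpha \in BC(l,1)_+} 4 m_\alpha \sum_{j\geq 1} \langle \alpha,\, z - \rho + \nu - 2j\alpha\rangle\, u_{\nu - 2j\alpha}\, e^{\langle 2j\alpha, x\rangle}.
\end{equation*}
Writing $\|\cdot\|$ for the standard Hermitian norm on $\mathbb{C}^{l+1}$, one has $r_\alpha := |e^{2\langle \alpha, x\rangle}| = e^{2\re\langle \alpha, x\rangle} \in (0,1)$ for $x \in B$ and $\alpha \in BC(l,1)_+$, so $\sum_{j \geq 1} r_\alpha^j$ and $\sum_{j \geq 1} j r_\alpha^j$ both converge. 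Applying the crude estimate $|\langle \alpha, z - \rho + \nu - 2j\alpha\rangle| \leq |\langle \alpha, z - \rho\rangle| + \|\alpha\|\,\|\nu\| + 2j\|\alpha\|^2$ together with the inductive hypothesis $|u_{\nu - 2j\alpha}| \leq K_{h(\nu)-1}$ (valid because $h(\nu - 2j\alpha) \leq h(\nu) - 1$), I obtain an inequality of the form
\begin{equation*}
|u_\nu| \leq \frac{A(x,z) + B(x)\,\|\nu\|}{|\langle \nu, \nu + 2z\rangle|}\, K_{h(\nu)-1}
\end{equation*}
for finite positive constants $A, B$ depending only on $x$ and $z$.

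Next I would establish a quadratic lower bound on the denominator. Writing $\nu = (\nu_1, \dots, \nu_l, \sqrt{k}\nu_{l+1})$ with $\nu_1, \dots, \nu_{l+1} \in 2\mathbb{Z}$, one has $\langle \nu, \nu\rangle = \sum_{i=1}^l \nu_i^2 + k \nu_{l+1}^2$, and the hypothesis $\re k > 0$ gives $\re\langle \nu, \nu\rangle \geq \min(1, \re k)\sum_{i=1}^{l+1}\nu_i^2$, hence $|\langle \nu, \nu\rangle| \geq c\,\|\nu\|^2$ for some $c > 0$. Combined with $|2\langle \nu, z\rangle| \leq 2\,\|\nu\|\,\|z\|$, this gives $|\langle \nu, \nu + 2z\rangle| \geq (c/2)\,\|\nu\|^2$ for $\|\nu\|$ past a threshold. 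For the finitely many remaining $\nu \in \mathcal{L}_+$ of smaller norm, the assumption $z \notin \Sigma_\nu$ together with discreteness of $\mathcal{L}_+$ supplies a uniform positive lower bound. Overall $|\langle \nu, \nu + 2z\rangle| \geq \varepsilon(1 + \|\nu\|^2)$ for all $\nu \in \mathcal{L}_+ \setminus \{0\}$ and some $\varepsilon > 0$, and the previous display then collapses to $|u_\nu| \leq C(x,z)\,\|\nu\|^{-1}\,K_{h(\nu)-1}$ once $\|\nu\|$ exceeds some threshold $N$.

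To close the induction, note that $\alpha_1, \dots, \alpha_{l+1}$ are $\mathbb{R}$-linearly independent in $\mathbb{C}^{l+1} \cong \mathbb{R}^{2(l+1)}$ (a direct check), so the quadratic form $\|\sum n_i \alpha_i\|^2$ in the real coefficients $n_i$ is positive definite, yielding a constant $c' > 0$ with $\|\nu\| \geq c'\, h(\nu)$ for all $\nu \in \mathcal{L}_+$. For $h$ large enough that $c' h \geq \max(N, 2C)$, every $\nu$ with $h(\nu) = h$ satisfies $|u_\nu| \leq \tfrac{1}{2} K_{h-1}$ and hence $K_h = K_{h-1}$; thus $(K_h)$ stabilises at a finite value $K$, which is the desired bound. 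The main technical point is that the quadratic growth of $|\langle \nu, \nu + 2z\rangle|$ must strictly dominate the linear-in-$\|\nu\|$ growth of the numerator of the recurrence, and this is exactly where the hypothesis $\re k > 0$ enters, by ensuring the quadratic lower bound $|\langle \nu, \nu\rangle| \geq c\,\|\nu\|^2$.
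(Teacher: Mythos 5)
Your proposal is correct and follows essentially the same strategy as the paper's proof: bound the numerator of the recurrence~\eqref{eqn: recurrence relations} linearly and the denominator $|\IP{\nu}{\nu+2z}|$ quadratically (handling the finitely many small $\nu$ via the hypothesis $z \notin \Sigma_\nu$), let the geometric decay of $|e^{2j\IP{\alpha}{x}}|$ for $x \in B$ absorb the sum over $j$, and induct on the height $h(\nu)$. The one genuine difference is how you obtain the quadratic lower bound $|\IP{\nu}{\nu}| \gtrsim \|\nu\|^2$: you exploit that $\IP{\nu}{\nu} = \sum_{i=1}^{l}\nu_i^2 + k\nu_{l+1}^2$ is diagonal in the standard coordinates, so $\re k > 0$ immediately gives positivity, whereas the paper works in the basis $\alpha_1, \dots, \alpha_{l+1}$ and compares $\re\IP{\nu}{\nu}$ with $h(\nu)^2$ via Sylvester's criterion and the matrix determinant lemma; your route is more elementary, at the cost of needing the extra (easy) comparison $\|\nu\| \geq c'\,h(\nu)$ at the end and the convergence of $\sum_j j r_\alpha^j$ to absorb the $j$-dependence of your numerator estimate.
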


\begin{proof}
    Let $\nu = \sum_{i=1}^{l+1} \nu^{(i)} \alpha_i \in \mathcal{L}_+$ and $\alpha \in BC(l,1)_+$ be arbitrary. We have
    \begin{equation}\label{inequality 1}
        |\IP{\nu-\rho+z}{\alpha}| \leq |\IP{z-\rho}{\alpha}| + \sum_{i=1}^{l+1} \nu^{(i)} |\IP{\alpha_i}{\alpha}| \leq \lambda_1 (h(\nu) + 1),
    \end{equation}
    where we let $\lambda_1 = \lambda_1(z) > 0$ be the maximum (depending on $z$ but not on $\nu$) of the finite set
    $$
    \{|\IP{z-\rho}{\beta}| \colon \beta \in BC(l,1)_+\} \cup
    \{|\IP{\alpha_i}{\beta}| \colon \beta \in BC(l,1)_+, \ i=1, \dots, l+1\}.
    $$ 
     
    Further, we have 
    \begin{equation}\label{nu^2}
        \re \IP{\nu}{\nu} = \sum_{i,j = 1}^{l+1} \nu^{(i)} \nu^{(j)} \IP{\widetilde \alpha_i}{\widetilde \alpha_j} = 
        \lVert \widetilde \nu \rVert^2
    \end{equation}
    for $\widetilde \nu = \sum_{i=1}^{l+1} \nu^{(i)} \widetilde \alpha_i$, where $\widetilde \alpha_i \in \mathbb{R}^{l+1}$ are obtained from $\alpha_i$ by replacing $\sqrt{k}$ with $\sqrt{\re k} \in \mathbb{R}$. 
    Here $\lVert \cdot \rVert$ denotes the usual (real) Euclidean norm.
    Since $\widetilde \alpha_i$ form a basis of $\mathbb{R}^{l+1}$,
    the expression~\eqref{nu^2} is a positive-definite (real) quadratic form in the variables $\nu^{(i)}$ with associated symmetric matrix 
    $A = (\IP{\widetilde \alpha_i}{\widetilde \alpha_j})_{i,j = 1}^{l+1}$. By Sylvester's criterion, all leading principal minors~$M_i$ of~$A$ are positive, $M_i = \det A_i$, where $A_i$ is the top left~$i\times i$ corner of $A$. For any $c \in \mathbb{R}$, the expression
    $
        \re \IP{\nu}{\nu} - c h(\nu)^2
    $
    is also a quadratic form in the variables~$\nu^{(i)}$ with associated matrix $A^c = (\IP{\widetilde \alpha_i}{\widetilde \alpha_j} - c)_{i,j = 1}^{l+1}$. By the matrix determinant lemma, the leading principal minors $M_i^c$ of $A^c$ are 
    $(1 - c \IP{u}{A_i^{-1}u})M_i$, where $u = (1, \dots, 1) \in \mathbb{R}^i$.
    Since $M_i$ are positive, $M_i^c$ are, too, for $c = \min \{ 2^{-1}\IP{u}{A_i^{-1}u}^{-1} \, | \, i=1, \dots, l+1 \} > 0$, and then Sylvester's criterion implies
    $$ |\IP{\nu}{\nu}| \geq \re \IP{\nu}{\nu} \geq c h(\nu)^2.$$

    We also have  
    \begin{equation*}
        |\IP{\nu}{z}| \leq \sum_{i=1}^{l+1} \nu^{(i)} |\IP{\alpha_i}{z}| \leq M h(\nu),
    \end{equation*}
    where we let $M = M(z) = \max \{|\IP{\alpha_i}{z}| \colon i=1, \dots, l+1\} > 0$.
    Whenever $h(\nu) \geq 4 M/c$, it follows that $|\IP{\nu}{z}| \leq ch(\nu)^2/4$, and then by the reverse triangle inequality
    \begin{equation*}
        |\IP{\nu}{\nu+2z}| \geq \big| |\IP{\nu}{\nu}| - 2|\IP{\nu}{z}|  \big| \geq c h(\nu)^2/2.
    \end{equation*}
    Letting $\lambda_2 = \lambda_2(z) > 0$ be the minimum of $c/2$ and the finitely-many, positive values $|\IP{\nu'}{\nu'+2z}|/h(\nu')^2$ for $\nu' \in \mathcal{L}_+ \setminus \{0\}$ with $h(\nu') < 4 M/c$ (here we are using that $z \notin \Sigma_{\nu'}$ by assumption), we thus get that
    \begin{equation}\label{inequality 2}
        |\IP{\nu}{\nu+2z}| \geq \lambda_2 h(\nu)^2
    \end{equation}
    for any $\nu \in \mathcal{L}_+$.

    By using inequalities~\eqref{inequality 1}, \eqref{inequality 2}, and the recurrence relation~\eqref{eqn: recurrence relations}, we get for all $\nu \in \mathcal{L}_+ \setminus \{0\}$ that
    \begin{equation}\label{first estimate}
        |q_\nu(z)| \leq 4 \lambda_1 \lambda_2^{-1} h(\nu)^{-1} \sum_{\alpha \in BC(l,1)_+} |m_\alpha| \sum_{j=1}^{\infty} |q_{\nu - 2j\alpha}(z)|,
    \end{equation}
    since $h(\nu - 2j\alpha) \leq h(\nu)-1$. Let $\lambda = 4 \lambda_1 \lambda_2^{-1}$.

    Since $x \in B$ and the geometric series is absolutely convergent on the open unit disk in~$\mathbb{C}$, there is~$N_0 \in \mathbb{Z}_{>0}$ such that
    \begin{equation}\label{N_0}
        \lambda \sum_{\alpha \in BC(l,1)_+} |m_\alpha| \sum_{j=1}^{\infty} |e^{2j \IP{\alpha}{x}}| \leq N_0.
    \end{equation}
    Let $K$ be such that
    \begin{equation}\label{base case}
        |q_\eta(z) e^{\IP{\eta}{x}}| \leq K
    \end{equation}
    for those (finitely many) $\eta \in \mathcal{L}_+$ which have $h(\eta) \leq N_0$. One can prove that~\eqref{base case} holds for all $\eta \in \mathcal{L}_+$ by induction on $h(\eta)$. Indeed, assume that~\eqref{base case} holds for those $\eta$ with $h(\eta) < N$ for some 
    integer $N > N_0$. Then for~$\nu \in \mathcal{L}_+$ with $h(\nu) = N$, we have by inequalities~\eqref{first estimate}, \eqref{N_0}, and the induction hypothesis that
    \begin{equation*}
        |q_\nu(z)| \leq \lambda N^{-1} \sum_{\alpha \in BC(l,1)_+} |m_\alpha| \sum_{j=1}^{\infty} K |e^{-\IP{\nu - 2j\alpha}{x}}|
        \leq K N_0 N^{-1}|e^{-\IP{\nu}{x}}| < K|e^{-\IP{\nu}{x}}|,  
    \end{equation*}
    which completes the proof by induction.
\end{proof}

As a corollary of Lemma~\ref{lem: Helgason}, we have the following statement. 
\begin{prop}\label{rem: analyticity}
    The series~\eqref{eqn: phi expanded} defines an analytic function in $z$, $x$, and $k$ on an open subset of~$\mathbb{C}^{2l+3}$.
\end{prop}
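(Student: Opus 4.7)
The plan is to identify a suitable open subset $U \subset \mathbb{C}^{2l+3}$ on which each term of the series is jointly holomorphic in $(z,x,k)$ and the series converges locally uniformly; then analyticity of the sum follows from the Weierstrass theorem on uniform limits of holomorphic functions.

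First, I would restrict to a simply connected open set in the $k$-variable, for instance $\{k \in \mathbb{C} : \re k > 0\}$, where the principal branch of $\sqrt{k}$ is well-defined and all the $k$-dependent quantities (the vectors in $BC(l,1)_+$, the generators $\alpha_i$ of $\mathcal{L}_+$, the vector $\rho$, and the hyperplanes $\Sigma_\nu$) depend holomorphically on $k$. Parametrising $\mathcal{L}_+$ by integer tuples $(\nu^{(1)}, \dots, \nu^{(l+1)}) \in \mathbb{Z}_{\geq 0}^{l+1}$ via $\nu \mapsto \sum_{i=1}^{l+1} \nu^{(i)} \alpha_i(k)$, an induction on height using the recurrence \eqref{eqn: recurrence relations} shows that each $q_\nu$ is a rational function of $z$ and $\sqrt{k}$ whose only poles are on the hyperplanes $\Sigma_{\nu'}$ for $\nu' \leq \nu$. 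Take $U$ to be the open subset of triples $(z,x,k)$ satisfying $\re k > 0$, $x \in B(k)$, and $z \notin \Sigma_\nu(k)$ for all $\nu \in \mathcal{L}_+ \setminus \{0\}$; then each term $q_\nu(z) e^{\IP{z-\rho+\nu}{x}}$ is jointly holomorphic on $U$.

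Next, I would upgrade Lemma~\ref{lem: Helgason} to a bound that is uniform on compact subsets of $U$. Inspecting its proof, the constants $\lambda_1, \lambda_2, M$, and the positivity constant $c$ of the quadratic form in \eqref{nu^2} are continuous functions of $(z,k)$ on $U$, while $N_0$ in \eqref{N_0} can be chosen uniformly on compacts where $\re \IP{\alpha_i(k)}{x} \leq -\varepsilon < 0$ for a fixed $\varepsilon > 0$ and a bound $|m_\alpha| \leq M'$ holds. For the base case \eqref{base case} one needs that the finitely many $q_\eta(z)$ with $h(\eta) \leq N_0$ are uniformly bounded on the compact set, which holds since each $q_\eta$ is continuous there. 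The inductive step in the proof of Lemma~\ref{lem: Helgason} then shows $|q_\nu(z) e^{\IP{\nu}{x}}|$ is bounded uniformly on the compact set by a single constant $K$, so that the series $\sum_\nu q_\nu(z) e^{\IP{\nu}{x}}$ is dominated, in fact with geometric decay obtainable by shrinking $\varepsilon$, yielding absolute and uniform convergence on compact subsets of $U$.

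Finally, since each partial sum is holomorphic on $U$ and the partial sums converge uniformly on compact subsets of $U$ to $\varphi(z,x,k) \cdot e^{-\IP{z-\rho}{x}}$, by the multivariable Weierstrass theorem the limit is holomorphic on $U$. Multiplying by the entire factor $e^{\IP{z-\rho}{x}}$ gives joint analyticity of $\varphi$ on $U \subset \mathbb{C}^{2l+3}$. The main obstacle is the uniformisation of the constants in Lemma~\ref{lem: Helgason}; in particular, one must verify that the lower bound $c$ for the quadratic form \eqref{nu^2}, which depends on $k$ through the generators $\widetilde \alpha_i$, remains bounded away from zero on compact subsets of $\{k : \re k > 0\}$, and that the finite set of $\nu' \in \mathcal{L}_+$ with $h(\nu') < 4M/c$ over which one takes the minimum defining $\lambda_2$ can be controlled uniformly in $z$ over compact subsets of $U$.
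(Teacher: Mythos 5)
Your proposal is correct, but it takes a genuinely different route from the paper at the final step. The paper establishes only the \emph{pointwise} bound of Lemma~\ref{lem: Helgason} on a product of small balls $U$ (tracking the continuity of the constants $c(k)$ and $M(z,k)$ just far enough to guarantee that $\re k>0$, $x\in B(k)$, and $z\notin\Sigma_\nu(k)$ throughout $U$), and then invokes Osgood's theorem on pointwise limits of holomorphic functions to conclude analyticity of $\varphi$ on an open \emph{dense} subset $V\subseteq U$. You instead propose to make every constant in Lemma~\ref{lem: Helgason} ($\lambda_1$, $c$, $M$, $\lambda_2$, $N_0$, and the base-case bound $K$) locally uniform in $(z,x,k)$, obtain locally uniform (indeed geometrically dominated, via the same ``replace $x$ by $x/2$'' device the paper uses for absolute convergence) convergence on compacts of $U$, and conclude by the Weierstrass theorem. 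This buys a strictly stronger conclusion --- analyticity on all of $U$ rather than on a dense open subset --- at the cost of the bookkeeping you correctly flag as the main obstacle: in particular, the lower bound on the quadratic form \eqref{nu^2} and the finite set of low-height $\nu'$ entering $\lambda_2$ must be controlled uniformly, which works because on a compact subset of $U$ one can take a single height cutoff $\sup 4M/c$ and each of the finitely many continuous, nonvanishing functions $|\IP{\nu'}{\nu'+2z}|$ is bounded below there. Both arguments prove the proposition as stated; yours also implicitly re-derives the fact (present in the paper's proof via the estimate $|\IP{\nu}{\nu+2z}|>c\,h(\nu)^2/4$ for large $h(\nu)$) that locally only finitely many hyperplanes $\Sigma_\nu(k)$ matter, so that $U$ is indeed open.
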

\begin{proof}
Let $k_0 \in \mathbb{C}$ with $\re k_0 > 0$. 
Since $\re \IP{\alpha_i}{x}$ is continuous in $x$ and $k$, there exists an open ball~$\mathcal{B}_{\mathfrak K}$ centred at $k_0$ and 
an open ball $\mathcal{B}_{\mathfrak X} \subset \mathbb{C}^{l+1}$ such that $\re k > 0$ and $\mathcal{B}_{\mathfrak X} \subset B(k)$  for all $k \in \mathcal{B}_{\mathfrak K}$. Take any~$z_0 \in \mathbb{C}^{l+1} \setminus \cup_{\nu \in \mathcal{L}(k_0)_+ \setminus \{0\}} \Sigma_\nu$.
Consider the constant $c = c(k_0)$ from the proof of Lemma~\ref{lem: Helgason} for~$k = k_0$. Note that $\IP{u}{A_i^{-1}u}^{-1}$ is continuous in $k$ at $k_0$. Consider also the constant $M = M(z_0, k_0)$ defined as in the proof of Lemma~\ref{lem: Helgason}. Note that~$|\IP{\alpha_i}{z}|$ is continuous in $z$ and $k$ at~$(z_0, k_0)$. Therefore, there exists an open ball $\mathcal{B}_{\mathfrak K}' \subseteq \mathcal{B}_{\mathfrak K}$ centred at $k_0$ and an open ball $\mathcal{B}_{\mathfrak Z} \subset \mathbb{C}^{l+1}$ centred at $z_0$ such that
for all $(z,k) \in \mathcal{B}_{\mathfrak Z} \times \mathcal{B}_{\mathfrak K}'$ we have $c(k) > c(k_0)/2$ and $M(z,k) < 2 M(z_0, k_0)$.

Since $|\IP{\nu(k_0)}{\nu(k_0) + 2z_0}| > 0$ for all $\nu(k_0) \in \mathcal{L}(k_0)_+$, and $|\IP{\nu(k)}{\nu(k) + 2z}|$ is continuous in $k$ and $z$, there exists an open ball $\mathcal{B}_{\mathfrak K}'' \subseteq \mathcal{B}_{\mathfrak K}'$ and an open ball $\mathcal{B}_{\mathfrak Z}' \subseteq \mathcal{B}_{\mathfrak Z}$ such that for the \emph{finitely many} $\nu = \nu(k) \in \mathcal{L}(k)_+$ with $h(\nu) \leq 16 M(z_0, k_0)/c(k_0)$ we have 
$|\IP{\nu}{\nu+2z}| > 0$  
for all $k \in \mathcal{B}_{\mathfrak K}''$ and $z \in \mathcal{B}_{\mathfrak Z}'$. On the other hand, for those $\nu$ with $h(\nu) > 16 M(z_0, k_0)/c(k_0)$, we have by the definition of the constants~$c(k)$ and~$M(z,k)$ that $|\IP{\nu}{\nu}| \geq c(k)h(\nu)^2 > c(k_0)h(\nu)^2/2$ and $|\IP{\nu}{z}| \leq M(z,k) h(\nu) < 2 M(z_0,k_0) h(\nu) < c(k_0)h(\nu)^2/8$, and so by the reverse triangle inequality 
\begin{equation*}
    |\IP{\nu}{\nu+2z}| \geq \big|  |\IP{\nu}{\nu}| - 2|\IP{\nu}{z}| \big| > c(k_0)h(\nu)^2/4 > 0. 
\end{equation*}

In other words, for any $(z,x,k) \in U \coloneqq \mathcal{B}_{\mathfrak Z}' \times \mathcal{B}_{\mathfrak X} \times \mathcal{B}_{\mathfrak K}''$, we have $\re k > 0$, $z \in \mathbb{C}^{l+1} \setminus \cup_{\nu \in \mathcal{L}(k)_+ \setminus \{0\}} \Sigma_\nu$, and $x \in B(k)$. 
Then the sum~$\varphi(z,x)$ of the series~\eqref{eqn: phi expanded} is well-defined by the discussion preceding Lemma~\ref{lem: Helgason}, and~$\varphi(z,x)$ is on $U$ the pointwise limit of a sequence of functions (the partial sums) that are analytic \emph{jointly} in all the variables~$z$, $x$, and $k$. 
As a consequence of (the multivariable version of) Osgood's Theorem~\cite[Theorem 8]{Krantz}, there exists an open dense subset~$V \subseteq U$ on which~$\varphi(z,x)$ is analytic (jointly) in the variables~$z$,~$x$, and $k$ and on which the convergence is locally uniform (for the original, single-variable case of Osgood's Theorem see~\cite[Theorem II]{Osgood}). 
\end{proof}


If $m_\alpha \in \mathbb{Z}_{\geq 0}$, then by the uniqueness of the solution of the system~\eqref{eqn: recurrence relations}, we must have that~$\delta(x)\psi(z,x)$  
is proportional to $\varphi(z,x)$ with the factor of proportionality being $c_0(z)$, that is 
\begin{equation*}
    \psi(z,x) = 2^{ln + r} \delta(x)^{-1} \varphi(z,x)\prod_{\substack{\alpha \in BC(l,1)^r_+ \\ s \in A_\alpha}} (- \IP{\alpha}{z} - s \IP{\alpha}{\alpha}).
\end{equation*}

For the case when $m_\alpha$ are not necessarily in $\mathbb{Z}_{\geq 0}$, but rather we have arbitrary $k, m,n,p,r \in \mathbb{C}$ with $\re k > 0$, $m=kp$, and $2n+1 = k(2r+1)$, let us define the following function 
\begin{equation*}
    \Psi(z,x) = C(z) \delta(x)^{-1} \varphi(z,x),
\end{equation*}
where we take any branch of $\delta(x)$, and where we defined the function 
\begin{align*}
    C(z) &= \prod_{\alpha \in \{\sqrt{k}e_{l+1}, \, e_i \, | \, 1 \leq i \leq l\}} 
    \frac{\Gamma\left(-\IP{\alpha}{z}\IP{\alpha}{\alpha}^{-1}\right)\Gamma\left(-\frac12\IP{\alpha}{z}\IP{\alpha}{\alpha}^{-1} - \frac12 m_\alpha \right)}{\Gamma\left(-\IP{\alpha}{z}\IP{\alpha}{\alpha}^{-1} - m_\alpha \right)\Gamma\left(-\frac12\IP{\alpha}{z}\IP{\alpha}{\alpha}^{-1} - \frac12m_\alpha - m_{2\alpha}\right)}  \\
    &\quad \times \prod_{\alpha \in \{e_i \pm e_j \, | \, 1 \leq i < j \leq l\}} 
    \frac{\Gamma\left(-\IP{\alpha}{z}\IP{\alpha}{\alpha}^{-1}\right)}{\Gamma\left(-\IP{\alpha}{z}\IP{\alpha}{\alpha}^{-1} - k \right)} \prod_{\alpha \in \{ e_i \pm \sqrt{k}e_{l+1} \,|\, 1 \leq i \leq l \}} (-\IP{\alpha}{z} - 1 - k).
\end{align*}
Here $\Gamma(u)$ is the classical gamma-function.
Then $H \Psi = -z^2 \Psi$, since $C(z)$ does not depend on~$x$ and~$\varphi$ solves equation~\eqref{eqn: SE for L} by construction.
Moreover, $\Psi(z,x)$ coincides (up to a constant factor) with $\psi(z,x)$ when $m_\alpha \in \mathbb{Z}_{\geq 0}$, since then~$C(z) \propto c_0(z)$ as $\Gamma(u)/\Gamma(u-N) = \prod_{i=1}^N(u-i)$ for $u \in \mathbb{C}$, $N \in \mathbb{Z}_{\geq 0}$. 

With that, the proof of the next theorem is then essentially the same as that of~\cite[Theorem 6.9]{Chalykh_2000}. It just uses Theorem~\ref{thm: bispectrality int case} in place of~\cite[Theorem 6.2]{Chalykh_2000}.  

\begin{thm}\label{thm: bispectrality non-int case}
For any $k, m,n,p,r \in \mathbb{C}$ with $\re k>0$, $m=kp$, and $2n+1 = k(2r+1)$, the function~$\Psi(z,x)$ satisfies 
\begin{align}
    H \Psi &= -z^2 \Psi, \nonumber \\
    D \Psi &= \mu(x)\Psi. \label{eqn: D Psi}
\end{align}
\end{thm}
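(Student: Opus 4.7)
The first identity $H\Psi = -z^2\Psi$ is essentially immediate from the construction of $\Psi$. Since $\varphi(z,x)$ satisfies~\eqref{eqn: SE for L} and Lemma~\ref{lem: potential-free gauge} identifies $L+\rho^2$ with $-\delta(x)\circ H\circ\delta(x)^{-1}$, we have $H[\delta(x)^{-1}\varphi] = -z^2 \delta(x)^{-1}\varphi$, and multiplying by the $x$-independent factor $C(z)$ preserves this. The plan is thus to focus on the second identity, $D\Psi = \mu(x)\Psi$, which I would prove by analytic continuation from the integer case, following the strategy of~\cite[Theorem~6.9]{Chalykh_2000} with Theorem~\ref{thm: bispectrality int case} in place of Chalykh's corresponding result.

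The starting point is the integer case. For admissible non-negative integer values of the multiplicities (and, if $l>1$, of $k$), Theorem~\ref{thm: bispectrality int case} gives $D\psi = \mu(x)\psi$; since $\psi = c_0(z)\delta(x)^{-1}\varphi(z,x)$ by Propositions~\ref{prop: analogue of 6.5} and~\ref{prop: analogue of 6.6}, and since $D$ acts only in~$z$, this rewrites as $D[c_0(z)\varphi(z,x)] = \mu(x)c_0(z)\varphi(z,x)$. As noted just before the theorem statement, in the integer case $C(z)\propto c_0(z)$ via the identity $\Gamma(u)/\Gamma(u-N)=\prod_{i=1}^{N}(u-i)$, so one also has $D[C(z)\varphi(z,x)] = \mu(x)C(z)\varphi(z,x)$, equivalently $(D-\mu(x))\Psi = 0$, at every admissible integer parameter value.

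For general complex parameters, the plan is to expand
\[
(D-\mu(x))[C(z)\varphi(z,x)] = e^{\IP{z-\rho}{x}} \sum_{\nu} F_\nu(z)\, e^{\IP{\nu}{x}}
\]
and to show that each coefficient $F_\nu(z)$ vanishes identically. Each $F_\nu(z)$ is a finite combination of products of $a_\tau(z)$, $\kappa_\tau$, $q_\mu(z+\tau)$, and $C(z+\tau)$ with $\tau \in S$. The key structural point is that under any $\tau \in S$, the arguments of all Gamma factors appearing in $C(z)$ shift by integers (for $\tau = \pm 2 e_i$ via the pairings of $\tau$ with $e_i$, $e_i\pm e_j$, and $e_i \pm \sqrt{k}e_{l+1}$; for $\tau = \pm 2\sqrt{k}e_{l+1}$ via the pairings with $\sqrt{k}e_{l+1}$), so the functional equation $\Gamma(u+1)=u\Gamma(u)$ makes every ratio $C(z+\tau)/C(z)$ a rational function of $z$ and of the complex parameters $(\sqrt{k},p,r)$. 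Combined with the rationality of $a_\tau$ and $q_\mu$ in these variables, this shows that each quotient $F_\nu(z)/C(z)$ is a rational function of $z$ and the parameters.

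The integer-case identity then says that each of these rational functions vanishes on the set of admissible integer parameter tuples, which is Zariski-dense in $\mathbb{C}^3$ (it contains an infinite three-dimensional grid). Hence each $F_\nu/C$, and therefore each $F_\nu$, vanishes identically, which gives $D\Psi = \mu(x)\Psi$ for all admissible complex values of the parameters. I expect the main technical obstacle to be the rationality verification for the ratios $C(z+\tau)/C(z)$: tracking the integer shifts of the various Gamma arguments across all $\tau\in S$ and all $\alpha \in BC(l,1)_+^r$ is a careful case-by-case calculation, but all the other ingredients are direct consequences of results already established in the preceding sections.
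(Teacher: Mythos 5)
Your proposal is correct and follows essentially the same route as the paper: the paper conjugates $D$ by $C(z)$ to form $\widetilde D = C(z)^{-1}\circ D\circ C(z)$ and reduces $\widetilde D\varphi = \mu(x)\varphi$ to an infinite family of identities, each a finite combination of the $q_\nu$ that is rational in $z,m,n,p,r,\sqrt{k}$, which hold for non-negative integer multiplicities by Theorem~\ref{thm: bispectrality int case} and hence identically; your division of each coefficient $F_\nu$ by $C(z)$ is the same manipulation. Your explicit remark about Zariski-density of the admissible integer parameter grid just spells out the paper's closing "it follows that they hold in general."
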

\begin{proof}
    It only remains to show equality~\eqref{eqn: D Psi}.
    For $m_\alpha \in \mathbb{Z}_{\geq 0}$, it follows from Theorem~\ref{thm: bispectrality int case}.
    More generally,
    equation~\eqref{eqn: D Psi} is equivalent to $\varphi$ satisfying $\widetilde D \varphi = \mu(x) \varphi$ for the difference operator 
    \begin{equation*}
        \widetilde D = C(z)^{-1} \circ D \circ C(z) = \sum_{\tau \in S} a_\tau(z) \left(C(z)^{-1} C(z+\tau) T_\tau - 1\right),
    \end{equation*}
    where $S=\{\pm 2e_i, \pm 2\sqrt{k}e_{l+1} \,|\, 1 \leq i \leq l\}$, since $D$ was of the form $D = \sum_{\tau \in S} a_\tau(z) (T_\tau - 1)$.
    For any $\tau \in S$, the function $a_\tau(z)$ is rational in $z,m,n,p,r, \sqrt{k}$, and so is the function $C(z)^{-1} C(z+\tau)$, since $\frac12\IP{\alpha}{\tau}\IP{\alpha}{\alpha}^{-1} \in \mathbb{Z}$ for $\alpha \in \{\sqrt{k}e_{l+1}, \, e_i \, | \, 1 \leq i \leq l\}$ and $\IP{\alpha}{\tau}\IP{\alpha}{\alpha}^{-1} \in \mathbb{Z}$ for $\alpha \in \{e_i \pm e_j \, | \, 1 \leq i < j \leq l\}$.
    
    By substituting the series~\eqref{eqn: phi expanded} into the equation $\widetilde D \varphi = \mu(x) \varphi$, the latter reduces (by looking at the coefficient of~$e^{\IP{\nu}{x}}$ for each $\nu \in \mathcal{L} \supset S$) to an infinite number of identities, each involving a finite number of the coefficients $q_\nu(z)$ and only involving rational functions of $z$, $m,n,p,r, \sqrt{k}$. Explicitly, these identities are
    \begin{equation*}
        \sum_{\tau \in S} (\kappa_\tau - a_\tau(z)) q_{\nu}(z) + \sum_{(\mu, \tau) \in \mathcal{L}_+ \times S \colon \mu + \tau = \nu}
        \left(a_{\tau}(z)C(z)^{-1}C(z+\tau) q_\mu(z + \tau) - \kappa_\tau q_\mu(z) \right) = 0
    \end{equation*}
    for $\nu \in \mathcal{L}$, where we put $q_\nu = 0$ if $\nu \notin \mathcal{L}_+$.
    Theorem~\ref{thm: bispectrality int case} implies that these identities hold when $m_\alpha \in \mathbb{Z}_{\geq 0}$, and then it follows that they hold in general. This completes the proof.
\end{proof}

By analyticity, the bispectrality property of Theorem~\ref{thm: bispectrality non-int case} holds in a bigger domain of analyticity of the function $\varphi$. To be more precise, 
by Proposition~\ref{rem: analyticity} we have on an open set $V$ an analytic function $\varphi$ that satisfies equation~\eqref{eqn: SE for L} and by the proof of Theorem~\ref{thm: bispectrality non-int case} also $\widetilde
     D \varphi = \mu(x) \varphi$. Suppose $(z,x,k) \in \mathbb{C}^{l+1} \times \mathbb{C}^{l+1} \times \mathbb{C}$ is such that~$\varphi$ can be analytically extended to a function~$\widetilde \varphi(z,x,k)$ on some neighbourhood $W$ of $(z,x,k)$ containing $V$. 
    The function $L \widetilde \varphi - (z^2 - \rho^2)\widetilde \varphi$ is analytic in $z,x,k$ away from the singularities of $\coth \IP{\alpha}{x}$ for $\alpha \in BC(l,1)_+$, and on $V$ it is identically zero. Thus, it must be zero on all of its domain of analyticity.
    Similarly, $\widetilde D \widetilde \varphi - \mu(x) \widetilde \varphi$ is analytic in $z,x,k$ away from the union $P$ of the poles of the functions $a_\tau(z)$ and $C(z+\tau)C(z)^{-1}$, and it vanishes on the open set~$V \setminus P$. Hence $\widetilde D \widetilde \varphi = \mu(x) \widetilde \varphi$ on $W \setminus P$.
    In terms of the function $\widetilde \Psi \coloneqq C \widetilde \varphi/ \delta$, this means that the following bispectrality relation is satisfied:
    \begin{align*}
        H \widetilde \Psi &= -z^2 \widetilde \Psi, \\
        D \widetilde \Psi &= \mu(x)\widetilde \Psi. 
    \end{align*}

\subsection*{Acknowledgements}
We are very grateful to our adviser Misha Feigin for suggesting to us this research problem and for his invaluable guidance and help, especially with Proposition~\ref{prop: analogue of 6.5} and Section~\ref{sec: non-integer multiplicities}. M.\,V.\ would also like to thank Oleg Chalykh for helpful discussions.

The work of I.\,M.\ was supported by a Student Research Bursary in Mathematics and Statistics from the Edinburgh Mathematical Society. L.\,R.\ was supported by an Undergraduate Research Bursary from the London Mathematical Society. 
M.\,V.\ was funded by a Carnegie--Caledonian PhD scholarship from the Carnegie Trust for the Universities of Scotland.

{\small
\subsection*{Data availability}
Data sharing is not applicable to this article as no datasets were generated or analysed.
}

\bibliographystyle{plain}\bibliography{ref.bib}
\end{document}